\documentclass[a4paper,11pt]{article}
\usepackage[small]{titlesec}
% -----------------------------------------
% For biblatex:
\usepackage[backend=biber,style=numeric,natbib=false,giveninits=true,doi=true,isbn=false,url=false,date=year]{biblatex}
%\DeclareNameAlias{sortname}{last-first}
\DeclareNameAlias{default}{last-first}

\DeclareFieldFormat[article]{title}{#1}
\renewbibmacro{in:}{}
\DeclareFieldFormat[article]{pages}{#1}
\DeclareFieldFormat{journal}{#1}
\renewcommand\bf\bfseries

\renewbibmacro*{volume+number+eid}{%
	\printfield{volume}%
	%  \setunit*{\adddot}% DELETED
	\setunit*{\addnbspace}% NEW (optional); there's also \addnbthinspace
	\printfield{number}%
	\setunit{\addcomma\space}%
	\printfield{eid}}
\DeclareFieldFormat[article]{number}{\mkbibparens{#1}}
\DeclareFieldFormat[article]{volume}{\textbf{#1}}
\DeclareFieldFormat{year}{\mkbibparens{#1}}
\DeclareBibliographyDriver{article}{%
	\printnames{author}:%
	\newunit\newblock
	\printfield{title}%
	\newunit\newblock
	\printfield{journaltitle}
	\newunit
	\iffieldundef{number}{\printfield{volume}}{\printfield{volume}\addspace\printfield{number}}
	\addcomma\addspace\printfield{pages}\addspace
	%\newunit
	\printfield{year}
	%\finentry
}
\addbibresource{mob-gap-cont.bib}
\AtEveryBibitem{\clearfield{month}}
\AtEveryBibitem{\clearfield{day}}
% -----------------------------------------
\usepackage[english]{babel}
\usepackage[latin1]{inputenc}
\usepackage[T1]{fontenc}
\usepackage{csquotes}
\usepackage{bbm}
\usepackage{amsmath,amsfonts,amsthm,amsbsy,amssymb,dsfont,stmaryrd}
\usepackage[dvipsnames]{xcolor}
\usepackage{braket}

\usepackage{mathtools}

%\mathtoolsset{showonlyrefs}

\numberwithin{equation}{section}

\newcommand\myshade{85}
\colorlet{mylinkcolor}{violet}
\colorlet{mycitecolor}{YellowOrange}
\colorlet{myurlcolor}{Aquamarine}

\usepackage[left=2.5cm,right=2.5cm,top=2.5cm,bottom=2.5cm]{geometry}
\usepackage[unicode=true,pdfusetitle,
bookmarks=true,bookmarksnumbered=false,bookmarksopen=false,
breaklinks=false,pdfborder={0 0 1},backref=false,
linkcolor  = mylinkcolor!\myshade!black,
citecolor  = mycitecolor!\myshade!black,
urlcolor   = myurlcolor!\myshade!black,
colorlinks = true,
]
{hyperref}
\usepackage[nameinlink]{cleveref}

%%%%%%%%%%%%%%%%%%%%%%%%%%

%%%%% GRAPHICS %%%%%%

\usepackage{tikz}
\usetikzlibrary{arrows}
\usetikzlibrary{intersections}

\usepackage{graphicx}
\usepackage{caption}

\definecolor{ct_black}{HTML}{000000}
\definecolor{ct_orange}{HTML}{ED872D}
\definecolor{ct_purple}{HTML}{7A68A6}
\definecolor{ct_blue}{HTML}{348ABD}
\definecolor{ct_turquoise}{HTML}{188487}
\definecolor{ct_red}{HTML}{E32636}
\definecolor{ct_pink}{HTML}{CF4457}
\definecolor{ct_green}{HTML}{467821}

\definecolor{ct2_green}{HTML}{9FF781}
\definecolor{ct2_green_dark}{HTML}{088A08}

\theoremstyle{plain}
\newtheorem{thm}{\protect\theoremname}[section]
\theoremstyle{plain}
\newtheorem{lem}[thm]{\protect\lemmaname}
\theoremstyle{plain}

\theoremstyle{plain}
\newtheorem{prop}[thm]{\protect\propositionname}
\theoremstyle{remark}

\theoremstyle{remark}
\newtheorem{rem}[thm]{\protect\remarkname}
\theoremstyle{definition}
\newtheorem{defn}[thm]{\protect\definitionname}
\theoremstyle{plain}

  \providecommand{\assumptionname}{Assumption}
\providecommand{\claimname}{Claim}
\providecommand{\corollaryname}{Corollary}
\providecommand{\definitionname}{Definition}
\providecommand{\lemmaname}{Lemma}
\providecommand{\propositionname}{Proposition}
\providecommand{\remarkname}{Remark}
\providecommand{\theoremname}{Theorem}
\providecommand{\examplename}{Example}

\crefname{section}{Section}{Sections}
\crefname{appendix}{Appendix}{Appendices}
\crefname{figure}{Figure}{Figures}
\crefname{assumption}{Assumption}{Assumptions}
\crefname{thm}{Theorem}{Theorems}
\crefname{lem}{Lemma}{Lemmas}
\crefformat{equation}{(#2#1#3)}

%\crefname{equation}{equation}{equations}
%\Crefname{equation}{Equation}{Equations}% For beginning \Cref
\crefrangelabelformat{equation}{(#3#1#4--#5#2#6)}

\crefmultiformat{equation}{(#2#1#3}{, #2#1#3)}{#2#1#3}{#2#1#3}
\Crefmultiformat{equation}{(#2#1#3}{, #2#1#3)}{#2#1#3}{#2#1#3}

\newtheorem*{lem*}{\protect\lemmaname}

% ############# MATH SHORTCUTS 
\newcommand{\ee}{\operatorname{e}}
\newcommand{\ii}{\operatorname{i}}
\newcommand{\Mat}{\operatorname{Mat}}
\newcommand{\ZZ}{\mathbb{Z}}
\newcommand{\TT}{\mathbb{T}}

\newcommand{\NN}{\mathbb{N}}
\newcommand{\RR}{\mathbb{R}}
\newcommand{\CC}{\mathbb{C}}

\newcommand{\EE}{\mathbb{E}}

\newcommand{\Gras}[2]{\operatorname{Gr}_{#1}(#2)}

\newcommand{\norm}[1]{\|#1\|}

\newcommand{\dif}{\operatorname{d}}
\newcommand{\tr}{\operatorname{tr}}
\renewcommand{\Im}{\operatorname{\mathbb{I}\mathbbm{m}}}
\renewcommand{\Re}{\operatorname{\mathbb{R}\mathbbm{e}}}
\newcommand{\ve}{\varepsilon}
\newcommand{\Id}{\mathds{1}}
% Specific shortcuts here
\newcommand{\HH}{\mathcal{H}}

\newcommand{\BLO}[1]{\mathcal{B}(#1)}
\newcommand{\Open}[1]{\mathrm{Open}(#1)}
\newcommand{\dist}[1]{\mathrm{dist}(#1)}

\newcommand{\Chern}{\operatorname{Ch}}

% Macros from chiral-1d-loc paper

% Macros from chiral-1d-bec paper

\newcommand{\supp}{\operatorname{supp}}

% Macros from Floquet paper

\newcommand{\im}{\operatorname{im}}

 % Set of eigenvalues

\newcommand{\vf}{\varphi}

%%%%%%%%%%%%%%%%%%%%

%%%%% METADATA %%%%%

\title{The Topology of Mobility-Gapped Insulators}
\author{Jacob Shapiro\\
\footnotesize{Mathematics Department, Columbia University, New York, NY 10027,
	USA}}

%%%%%%%%%%%%%%%%%%%%

\begin{document}
	
\maketitle

\begin{abstract}
Studying deterministic operators, we define an appropriate topology on the space of mobility-gapped insulators such that topological invariants are continuous maps into discrete spaces, we prove that this is indeed the case for the integer quantum Hall effect, and lastly we show why our "insulator" condition makes sense from the point of view of the localization theory using the fractional moments method.
\end{abstract}

\section{Introduction}
Topological insulators \cite{Hasan_Kane_2010} are usually studied in physics by assuming translation invariance, which allows for a topological description of Hamiltonians in terms of continuous maps from the Brillouin torus $\TT^d\to X$ where $X$ is some smooth manifold which depends on the symmetry class under consideration (for example, for a system in class A (no symmetry) gapped after $n$ levels, $X=\Gras{n}{\CC^\infty}$, the Grassmannian manifold). Such a description is extremely convenient because one may immediately apply classical results from algebraic topology, for example, that the set of homotopy classes $[\TT^2\to\Gras{n}{\CC^\infty}]\cong\ZZ$. This approach led to many classification results \cite{Schnyder_Ryu_Furusaki_Ludwig_1367-2630-12-6-065010,Schnyder_Ryu_Furusaki_Ludwig_PhysRevB.78.195125} which culminated in the Kitaev periodic table of topological insulators \cite{Kitaev2009}, all the while ignoring the fact that the systems to be analyzed are actually \emph{not} translation invariant, and in fact certain physical features of the phenomena demand strong-disorder. Indeed, the plateaus of the integer quantum Hall effect (IQHE henceforth) are explained only when assuming the Fermi energy lies in a region of localized states (\emph{the mobility gap regime}) which cannot appear in a translation invariant system.

Hence a physically more realistic description calls for understanding \emph{disordered} systems in which Bloch decomposition cannot be used. This has been done for the IQHE in \cite{Bellissard_1994} by applying ideas from non-commutative geometry, and later generalized to the Kitaev table in \cite{PSB_2016,Bourne2017} (and references therein). One problem with the application of non-commutative geometry is that it still required the Fermi energy to be placed in a spectral gap, which is why \cite{Bellissard_1994} goes beyond the C-star algebra generated by continuous functions of the Hamiltonian by defining a so-called "non-commutative Sobolev spaces". Such an approach still uses crucially the translation invariance of the system; in contrast to the studies in physics, however, translation is used in terms of the probability distributions defining a random model. That is, a whole statistical ensemble of models is considered simultaneously, proofs use the covariance property, and statements are made either almost-surely or about disorder averages. Furthermore sometimes certain statements could also not be extended to these bigger Sobolev spaces, including the bulk-edge correspondence or the definition of edge invariants (which requires a certain regularization, see \cite[Eq-n (1.2)]{EGS_2005}).

Further explorations of the mobility gap regime using only one particular deterministic realization (i.e., \emph{without} referring to a statistical ensemble) and without using covariance were pioneered in \cite{EGS_2005} for the IQHE and extended in \cite{Graf_Shapiro_2018_1D_Chiral_BEC,Shapiro_Tauber_BEC} for chiral and Floquet topological systems respectively. These studies demonstrate that topological properties do not need translation invariance nor statistical averaging, and should be associated to one particular mobility gapped Hamiltonian rather than an ensemble. Since \cite{EGS_2005,Graf_Shapiro_2018_1D_Chiral_BEC,Shapiro_Tauber_BEC} do not use the algebraic framework of non-commutative geometry, topological properties such as defining the ambient space of topological insulators or establishing local constancy (w.r.t. deformations of Hamiltonians) of the invariants have been hard to establish (in contrast to the very natural appearance of such properties in the framework of \cite{Bellissard_1994}), despite the fact they are important, being one of the defining physical features of topological insulators.

In this note, we have the modest goal of continuing the deterministic line of research of \cite{EGS_2005,Graf_Shapiro_2018_1D_Chiral_BEC,Shapiro_Tauber_BEC} by defining a space of insulators and proving that a topological invariant is locally constant with respect to it. We use the IQHE as a case study since we understand it best, though future studies for other cases of topological insulators are certainly interesting. For example, chiral one-dimensional systems may relate to an extension of Fredholm theory to operators without closed range, but which obey a localization estimate instead, which allows to salvage the local constancy of an integer-valued index. 

This note is organized as follows. We begin by giving precise definition of what we mean by a "topological insulator" in our deterministic setting, define a topology on this space, and discuss its various properties. In the next section we take up the IQHE as a case study and establish the deterministic local constancy. In the last section we discuss why our deterministic definition of "insulator" makes sense for probabilistic models which exhibit localization about the Fermi energy.

In regards to existing literature, the question of the appropriate topology for topological insulators has already been raised in previous papers, see the discussion in \cite[Appendix D]{Freed2013} and \cite[Introduction]{Kellendonk2017}. The mobility gap continuity has also been dealt with before, in the context of probabilistic covariant models, see \cite[Proposition 5.2]{PRODAN20161150} and references therein.

\section{Deterministic topological insulators}
Let $d\in\NN$, the space dimension, $N\in\NN$, the internal number of degrees of freedom, be given and fixed. We define our Hilbert space as $\HH:=\ell^2(\ZZ^d)\otimes\CC^N$, and for an operator $A\in\BLO{\HH}$, $A_{xy}\equiv\langle\delta_x, A\delta_y\rangle$ is an $N\times N$ matrix with $(\delta_x)_{x\in\ZZ^d}$ being the position basis of $\ell^2(\ZZ^d)$. $\norm{\cdot}$ is either a matrix norm on $\Mat_N(\CC)$ or the $1$-norm on $\RR^d$.

We next define a metric on $\BLO{\HH}$. For brevity, let $X:=(0,\infty)$.

\begin{defn}For any $A,B\in\BLO{\HH}$, define the \emph{local distance} between them as \begin{align}
	d_\ell(A,B) := \inf(\Set{t\in X | \exists C,\mu\in X : t = \max(\Set{C,\mu^{-1}})\land\norm{(A-B)_{xy}} \leq C \ee^{-\mu\norm{x-y}}\forall x,y\in\ZZ^d })
	\label{eq:local metric}\,.
\end{align}
This distance measures not only how close-by position-basis matrix elements are, but also their rate of off-diagonal exponential decay. The metric is mainly used as follows:
\begin{lem}
	If $d_\ell(A,B)\in X$ then $\norm{(A-B)_{xy}} \leq d_\ell(A,B) \ee^{-\norm{x-y}/d_\ell(A,B)}\forall x,y\in\ZZ^d$.
	\label{lem:meaning of local norm for approximating exp decay}
\end{lem}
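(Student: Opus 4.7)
The plan is to unwind the definition of the infimum and exploit monotonicity of the function $t\mapsto t\ee^{-a/t}$ in $t$ (for fixed $a\geq 0$). Let $S(A,B)\subseteq X$ denote the set over which the infimum in \cref{eq:local metric} is taken, and write $t^*:=d_\ell(A,B)$. By hypothesis $t^*\in X=(0,\infty)$, so in particular $t^*$ is finite and strictly positive. For any $\ve>0$, pick some $t\in S(A,B)$ with $t<t^*+\ve$; by definition of $S(A,B)$ there exist $C,\mu\in X$ with $t=\max(C,\mu^{-1})$ satisfying $\norm{(A-B)_{xy}}\leq C\ee^{-\mu\norm{x-y}}$ for all $x,y\in\ZZ^d$. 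From $t=\max(C,\mu^{-1})$ we read off $C\leq t$ and $\mu\geq 1/t$, hence
\begin{equation*}
\norm{(A-B)_{xy}}\;\leq\;C\,\ee^{-\mu\norm{x-y}}\;\leq\;t\,\ee^{-\norm{x-y}/t}\,.
\end{equation*}

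Next I would verify the elementary calculus fact that, for each fixed $a\geq 0$, the function $f_a(t):=t\,\ee^{-a/t}$ is nondecreasing on $(0,\infty)$: a direct differentiation gives $f_a'(t)=\ee^{-a/t}\bigl(1+a/t\bigr)\geq 0$. In particular $f_a(t)\leq f_a(t^*+\ve)$ for our chosen $t<t^*+\ve$, which combined with the previous display yields
\begin{equation*}
\norm{(A-B)_{xy}}\;\leq\;(t^*+\ve)\,\ee^{-\norm{x-y}/(t^*+\ve)}\qquad\forall x,y\in\ZZ^d\,.
\end{equation*}

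Finally I would let $\ve\searrow 0$. Since $t^*>0$, the function $s\mapsto s\,\ee^{-\norm{x-y}/s}$ is continuous at $s=t^*$, so the right-hand side tends to $t^*\,\ee^{-\norm{x-y}/t^*}$, yielding the claimed bound. There is no real obstacle here; the only subtlety worth flagging is that the infimum in \cref{eq:local metric} need not be attained, which is precisely why one passes through an arbitrary $t>t^*$ and uses the monotonicity of $f_a$ to push the bound down to $t^*$ in the limit. The positivity hypothesis $d_\ell(A,B)\in X$ is used to ensure that the limiting exponent $1/t^*$ is finite and that the limit operation is legitimate.
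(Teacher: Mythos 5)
Your proof is correct and follows essentially the same route as the paper: approximate the infimum to within $\ve$, read off the bounds $C\leq t$ and $\mu\geq 1/t$ from $t=\max(C,\mu^{-1})$, and let $\ve\searrow0$. The explicit monotonicity check for $t\mapsto t\ee^{-a/t}$ is a slightly more pedantic intermediate step than the paper's direct substitution of $d_\ell(A,B)+\ve$ for both $C_\ve$ and $\mu_\ve^{-1}$, but the argument is the same.
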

\begin{proof}
	By the approximation property of the infimum, we have $\forall\ve>0$ some $C_\ve,\mu_\ve \in X$ such that $d_\ell(A,B)\leq \max(C_\ve,\mu_\ve^{-1}) < d_\ell(A,B)+\ve$ and $\norm{(A-B)_{xy}} \leq C_\ve \ee^{-\mu_\ve\norm{x-y}}\forall x,y\in\ZZ^d$. This implies that $C_\ve \leq d_\ell(A,B)+\ve$ and $\mu_\ve \geq (d_\ell(A,B)+\ve)^{-1}$. Hence \begin{align}
		\norm{(A-B)_{xy}} \leq (d_\ell(A,B)+\ve) \ee^{-\norm{x-y}/(d_\ell(A,B)+\ve)}
	\end{align}
	for all $\ve>0$ which implies the result.
\end{proof}
\end{defn}
It is of course comforting to know that
\begin{lem}
	$d_\ell$ is a metric on $\BLO{\HH}$. 
	\begin{proof}
		By definition it is obvious that $d_\ell$ is non-negative and symmetric.
		
		Next, we want that $d_\ell(A,A)=0$. The second requirement in the set under the infimum in \cref{eq:local metric} becomes trivial for $d_\ell(A,A)=0$, so that taking $C>0$ arbitrarily small and $\mu$ fixed for example we reach a zero infimum.
		
		If we assume that $d_\ell(A,B)=0$, we want that $A=B$. This assumption implies that $\forall\ve>0\exists C_\ve,\mu_\ve\in X$ with $\max(\Set{C_\ve,\mu_\ve^{-1}})<\ve$ and $\norm{(A-B)_{xy}} \leq C_\ve \ee^{-\mu_\ve\norm{x-y}}$, which implies $\norm{(A-B)_{xy}} \leq C \leq \ve$. This means $A=B$.
		
		Finally we get to the triangle inequality. Let $A,B,C\in\BLO{\HH}$ be given. We have by the usual triangle inequality for the matrix norm that $\norm{(A-B)_{xy}}\leq\norm{(A-C)_{xy}} +\norm{(C-B)_{xy}} $. By \cref{lem:meaning of local norm for approximating exp decay}, we then get \begin{align*}\norm{(A-B)_{xy}} &\leq d_\ell(A,C) \ee^{-\norm{x-y}/d_\ell(A,C)} + d_\ell(C,B) \ee^{-\norm{x-y}/d_\ell(C,B)} \\ &\leq (d_\ell(A,C)  + d_\ell(C,B) )\ee^{-\norm{x-y}/(\min(d_\ell(A,C),d_\ell(C,B )))}\end{align*} which means that $d_\ell(A,B)\leq\max(\Set{d_\ell(A,C)  + d_\ell(C,B),\min(d_\ell(A,C),d_\ell(C,B ))}) = d_\ell(A,C)  + d_\ell(C,B)$, so we are finished.
	\end{proof}
\end{lem}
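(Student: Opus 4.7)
The plan is to verify the four metric axioms in order of increasing difficulty, leveraging \cref{lem:meaning of local norm for approximating exp decay} to convert the two-parameter infimum into a single pointwise exponential bound whenever needed.

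Non-negativity and symmetry are immediate: the set under the infimum lies in $X=(0,\infty)$ and $\norm{(A-B)_{xy}}=\norm{(B-A)_{xy}}$, so swapping $A$ and $B$ leaves the set defining $d_\ell$ unchanged. For $d_\ell(A,A)=0$, I would note that for any $\varepsilon\in X$ the choice $C=\mu^{-1}=\varepsilon$ trivially satisfies the decay estimate (the left-hand side is zero), forcing the infimum to vanish. For the separation axiom, assuming $d_\ell(A,B)=0$, I would unpack the infimum: for every $\varepsilon>0$ there exist $C_\varepsilon,\mu_\varepsilon\in X$ with $\max(C_\varepsilon,\mu_\varepsilon^{-1})<\varepsilon$, so in particular $\norm{(A-B)_{xy}}\le C_\varepsilon\ee^{-\mu_\varepsilon\norm{x-y}}\le C_\varepsilon<\varepsilon$ for all $x,y$, and letting $\varepsilon\to 0$ yields $A=B$.

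The main work, as expected, lies in the triangle inequality, because the infimum is over a max of two quantities rather than a single sum. Given $A,B,C\in\BLO{\HH}$, my plan is to apply \cref{lem:meaning of local norm for approximating exp decay} to each of the pairs $(A,C)$ and $(C,B)$, obtaining pointwise exponential bounds with rates $d_\ell(A,C)^{-1}$ and $d_\ell(C,B)^{-1}$ respectively, then use the ordinary triangle inequality on the matrix norm to combine them. The resulting sum of two exponentials must then be collapsed into a single exponential of the worse (larger) decay length, which gives
\[
\norm{(A-B)_{xy}}\le(d_\ell(A,C)+d_\ell(C,B))\,\ee^{-\norm{x-y}/\min(d_\ell(A,C),d_\ell(C,B))}.
\]
Feeding this back into the definition of $d_\ell(A,B)$ with $C:=d_\ell(A,C)+d_\ell(C,B)$ and $\mu^{-1}:=\min(d_\ell(A,C),d_\ell(C,B))$, the key observation is that $\min\le\text{sum}$, so the maximum of these two quantities equals the sum, which is exactly the bound we want on $d_\ell(A,B)$.

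The only subtle point to watch is the edge case in which one of $d_\ell(A,C)$, $d_\ell(C,B)$ is zero, so that \cref{lem:meaning of local norm for approximating exp decay} does not literally apply (since it assumes $d_\ell\in X=(0,\infty)$). In that case the separation axiom already proven makes the corresponding operators coincide, so the triangle inequality reduces to the trivial statement $d_\ell(A,B)\le d_\ell(C,B)$ or $d_\ell(A,B)\le d_\ell(A,C)$; this can either be handled as a short separate case or circumvented by working with $d_\ell(A,C)+\delta$ and $d_\ell(C,B)+\delta$ and sending $\delta\downarrow 0$ at the end.
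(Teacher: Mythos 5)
Your proposal is correct and follows essentially the same route as the paper's own proof: the same verification of the easy axioms, and the same treatment of the triangle inequality via \cref{lem:meaning of local norm for approximating exp decay}, collapsing the two exponentials to the worse decay rate and observing that the maximum of the sum and the minimum is the sum. Your explicit handling of the edge case where $d_\ell(A,C)$ or $d_\ell(C,B)$ vanishes (so the lemma does not literally apply) is a small point the paper glosses over, and your fix is fine.
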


\begin{rem}
	$d_\ell$ is unfortunately not homogeneous, so it cannot induce a norm (it is translation invariant though). This is because it measures also the rate of exponential decay. Compare this with the local norm of \cite{GrafTauber18} which has a fixed rate of decay.
\end{rem}

\begin{rem}\label{rem:the local metric topology is too fine}
	To require off-diagonal \emph{exponential} decay is probably stronger than what is necessary for the topological invariants to be well-defined and continuous (as we see below). However, in the interest of keeping the calculations somewhat simpler, we prefer to stipulate one concrete form of off-diagonal decay. This means that the topology induced by $d_\ell$ on the space insulating Hamiltonians (to be defined below) is finer than the \emph{initial topology} corresponding to the topological invariant considered as a map from insulating Hamiltonians into $\ZZ$. Since a continuous map may stop being so if the topology of its domain is made coarser, that means that using the topology induced by $d_\ell$, a-priori, we might not detect all path-connected components on the space of topological insulators. Since in this paper we anyway don't concern ourselves with calculating the space of path-connected components of topological insulators, we ignore this problem.
\end{rem}

\begin{lem}
	If $A_n\to A$ in the topology induced by $d_\ell$ then $A_n\to A$ in the norm operator topology.
\end{lem}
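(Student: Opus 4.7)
The approach is to translate the exponential decay of matrix elements supplied by \cref{lem:meaning of local norm for approximating exp decay} into an operator-norm bound via a Schur test on $\ell^2(\ZZ^d;\CC^N)$.

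Set $t_n := d_\ell(A_n, A)$; by hypothesis $t_n \to 0$, and for all sufficiently large $n$ we have $t_n \in X$, so \cref{lem:meaning of local norm for approximating exp decay} yields the pointwise estimate
\begin{align*}
\norm{(A_n - A)_{xy}} \;\leq\; t_n\, \ee^{-\norm{x-y}/t_n} \qquad (x, y \in \ZZ^d).
\end{align*}
Writing $B := A_n - A$ and viewing vectors $\psi \in \HH$ as functions $\psi:\ZZ^d \to \CC^N$, the elementary domination $\norm{(B\psi)(x)} \leq \sum_y \norm{B_{xy}}\,\norm{\psi(y)}$ reduces the claim to the scalar Schur test on $\ell^2(\ZZ^d)$ applied to the nonnegative kernel $K(x,y) := \norm{B_{xy}}$, giving
\begin{align*}
\norm{A_n - A}_{\mathrm{op}} \;\leq\; \Bigl(\sup_x \sum_y K(x,y)\Bigr)^{\!1/2} \Bigl(\sup_y \sum_x K(x,y)\Bigr)^{\!1/2}.
\end{align*}

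By the symmetry $\norm{x-y} = \norm{y-x}$ and translation invariance of $\ZZ^d$, both row- and column-sums equal $t_n \sum_{z \in \ZZ^d} \ee^{-\norm{z}/t_n}$. Since $\norm{\cdot}$ is the $1$-norm on $\RR^d$, this sum factorizes as $t_n \bigl(\sum_{k \in \ZZ} \ee^{-|k|/t_n}\bigr)^{d}$, and each one-dimensional factor evaluates to $(1 + \ee^{-1/t_n})/(1 - \ee^{-1/t_n}) \to 1$ as $t_n \to 0^+$. Consequently $\norm{A_n - A}_{\mathrm{op}} \leq C\, t_n$ for some constant $C$ once $n$ is large, and in particular $\norm{A_n - A}_{\mathrm{op}} \to 0$.

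I expect no genuine obstacle here; the only point demanding slight care is the block-matrix version of the Schur test, which follows transparently from the scalar case via the pointwise $\CC^N$-norm domination recorded above.
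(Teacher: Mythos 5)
Your proof is correct and follows essentially the same route as the paper: the Schur test you invoke is the Holmgren bound the paper uses, and your explicit one-dimensional sum $(1+\ee^{-1/t_n})/(1-\ee^{-1/t_n})$ is exactly the paper's $\coth(1/(2t_n))$ factor, tending to $1$ as $t_n\to0^+$.
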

\begin{proof}
	Recall Holmgren's bound \begin{align*}
		\norm{A} \leq \max_{x\leftrightarrow y}\sup_{y\in\ZZ^d}\sum_{x\in\ZZ^d}\norm{A_{xy}}
	\end{align*} 
	which together with \cref{lem:meaning of local norm for approximating exp decay} implies that \begin{align}\norm{A_n-A}\leq d_l(A_n,A) (\coth(\frac{1}{2d_l(A_n,A)}))^d\,.\label{eq:estimate of operator norm by local metric}\end{align}
	
	We conclude by noting that $\coth(t)\to1$ as $t\to\infty$.
\end{proof}

This shows also that $A_n\to A$ in $d_\ell$ does \emph{not} imply the same in trace norm.

\begin{defn}
	A Hamiltonian $H$ is a self-adjoint operator in $\BLO{\HH}$ such that $d_\ell(0,H)<\infty$.
\end{defn}

Without loss of generality we assume onwards that the Fermi energy is placed at zero energy (if this is not the case replace the Hamiltonian by a shifted one).

\begin{defn}\label{def:insulator}
	An insulator is a Hamiltonian $H$ such that there is some open interval $\RR\supseteq\Delta\ni 0$ such that: \begin{itemize}
			\item All eigenvalues of $H$ within $\Delta$ are \emph{uniformly} of finite degeneracy: $$\sup_{\lambda\in\Delta}\|\chi_{\{\lambda\}}(H)\|_1 < \infty\,.$$
			\item With $B_1(\Delta)$ the set of all Borel bounded functions $f:\RR\to\CC$ which obey $|f(\lambda)|\leq 1\forall\lambda\in\RR$ and which are constant below and above $\Delta$ (possibly with different constants) (see \cite{EGS_2005}) there are $C<\infty,\mu>0,a\in\ell^p(\ZZ^d)$ such that  \begin{align}
				\sup_{f\in B_1(\Delta)} \norm{f(H)_{xy}} \leq C |a(x)|^{-1} \ee^{-\mu\norm{x-y}}\qquad(x,y\in\ZZ^d)\,.
				\label{eq:dynamical constraint for an insulator}
			\end{align}
		\end{itemize}
The space of all insulators where the objects involved $\Delta,C,a,\mu$ in the estimates \cref{eq:dynamical constraint for an insulator} are uniformly bounded by some fixed given worst objects $\Delta_0,C_0,a_0,\mu_0$ is denoted by $\mathcal{I}\equiv\mathcal{I}(\Delta_0,C_0,a_0,\mu_0)$, i.e., we have $|\Delta_0|>0$, $C_0<\infty$, $a_0\in\ell^1(\ZZ^d)$, $\mu_0>0$ and and for any insulator $H\in\mathcal{I}(\Delta_0,C_0,a_0,\mu_0)$ we have: $|\Delta|\geq|\Delta_0|$, $C\leq C_0$, $\|a\|_1\leq \|a_0\|_1$ and $\mu\geq\mu_0$. $\mathcal{I}$ is a subset of $\BLO{\HH}$. We give it the subspace topology induced by the metric topology from $d_\ell$ on the space of all Hamiltonians.
\end{defn}

This definition is not new. It encompasses essentially the same constraints as in \cite[eq-ns (1.1)-(1.3)]{EGS_2005}, has been used already in \cite{Graf_Shapiro_2018_1D_Chiral_BEC,Shapiro_Tauber_BEC}, and represents almost-sure consequences of a probabilistic model that exhibits strong dynamical localization (in the sense of \cite{Aizenman_Graf_1998}) about zero energy. See further remarks in \cref{sec:signatures of localization}.

We note that it is probably false that $\mathcal{I}$ is an open subset with respect to the topology induced by $d_\ell$. Indeed, in \cite{gordon1994} it is observed that Anderson localization of "generic" models (sufficiently random) breaks down by a rank-one perturbation with arbitrarily small norm. See also \cite{delrio1994}.

One could hope to get rid of the uniform objects $\Delta_0,C_0,a_0,\mu_0$ on which $\mathcal{I}$ depends. This might be possible, but probably requires further specification of the details of the models considered and how their randomness arises (i.e. the probability distributions). In order to avoid this specificity we use these rather unsatisfying uniform bounds. They allow us to conclude that if we have a sequence $(H_n)_n\subset\mathcal{I}$ such that $d_\ell(H_n,H)\to0$ for some fixed $H\in\mathcal{I}$, the corresponding objects in the localization estimates $\Delta_n,C_n,a_n,\mu_n$ cannot explode. In analogy to the spectral gap regime, this is tantamount to assuming not only that there is a gap, but the whole collection of operators we consider has a uniform gap, which is unnecessary for spectrally gapped systems as the resolvent set is open in $\CC$. Stated differently, the set of invertible operators is open in $\BLO{\HH}$. In turn, this uniform restriction means we cannot detect the path-connected components of topological phases of insulators.

Later on, we prove that 
\begin{lem}\label{lem:Energy avg of Greens function decays uniformly in height from real axis} If $H$ is an insulator as above, then for any compact sub-interval $\Delta'\subsetneq\Delta$, there is some $s\in(0,1),b\in\ell^1(\ZZ^d)$ such that for all $\alpha\in\NN$ we have some $D_{\alpha}<\infty$ with \begin{align}
	\sup_{\eta\neq0}\int_{\Delta'}\norm{G(x,y;\cdot+\ii\eta)}^s \leq D_{\alpha} |b(x)|^{-1}(1+\|x-y\|)^{-\alpha}\qquad(x,y\in\ZZ^d)
	\label{eq:additional constraint for topological insulators}
	\end{align} where $G(x,y;z)\equiv R(z)_{xy} \equiv (H-z\Id)^{-1}_{xy}$ is the Greens' function associated to $H$.
\end{lem}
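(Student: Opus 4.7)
The plan is to decompose the resolvent using the spectral projection $P := \chi_\Delta(H)$ onto $\Delta$ and to bound the two resulting pieces separately. Because $\chi_\Delta \in B_1(\Delta)$ (bounded by $1$, identically $0$ outside $\Delta$), the insulator estimate \cref{eq:dynamical constraint for an insulator} instantly makes $P$ itself pseudo-local: $\|P_{xy}\| \leq C_0|a_0(x)|^{-1} e^{-\mu_0 \|x-y\|}$. Since $[H,P]=0$, one has $R(z) = PR(z)P + P^\perp R(z) P^\perp$, and the subadditivity $|A+B|^s \leq |A|^s + |B|^s$ for $s\in(0,1)$ reduces the claim to a separate estimate on each summand.

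For the inner piece I apply the spectral theorem to write $(PR(z)P)_{xy} = \int_\Delta (\lambda - z)^{-1}\,d\mu_{xy}(\lambda)$ with $\mu_{xy}(I):=\langle\delta_x,\chi_I(H)\delta_y\rangle$. A preliminary observation---obtained by approximating the total variation by simple functions $\sum_k \epsilon_k \chi_{I_k}\in B_1(\Delta)$ with $|\epsilon_k|=1$ and partitions $\{I_k\}$ of $\Delta$---is that $|\mu_{xy}|(\Delta)\leq C_0|a_0(x)|^{-1} e^{-\mu_0 \|x-y\|}$ as well. Jensen's inequality applied to $t\mapsto t^s$ on the probability measure $d|\mu_{xy}|/|\mu_{xy}|(\Delta)$ then gives
\begin{align*}
\|(PR(z)P)_{xy}\|^s\;\leq\; |\mu_{xy}|(\Delta)^{s-1}\int_\Delta \frac{d|\mu_{xy}|(\lambda)}{|\lambda - z|^s}\,,
\end{align*}
and Fubini combined with the uniform bound $\int_{\Delta'} |\lambda - E - i\eta|^{-s}\,dE \leq K_s < \infty$ (finite because $s<1$ and $\Delta'$ is compact) yields $\int_{\Delta'}\|(PR(z)P)_{xy}\|^s\,dE \leq K_s |\mu_{xy}|(\Delta)^s$, which decays exponentially in $\|x-y\|$.

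For the outer piece I introduce the auxiliary self-adjoint operator $\tilde H := H P^\perp + \lambda_0 P$ for some fixed $\lambda_0 \in \RR\setminus \overline\Delta$. Using $[H,P]=0$, $\tilde H$ is block-diagonal with respect to $(P,P^\perp)$ with $\sigma(\tilde H)\subseteq \{\lambda_0\}\cup(\sigma(H)\setminus\Delta)$, so $\mathrm{dist}(\Delta',\sigma(\tilde H))\geq \delta > 0$ uniformly. A block computation gives $(\tilde H - z)^{-1} = (\lambda_0 - z)^{-1}P + P^\perp R(z)P^\perp$, hence $P^\perp R(z) P^\perp = (\tilde H - z)^{-1} - (\lambda_0 - z)^{-1}P$. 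Writing $\tilde H = H - HP + \lambda_0 P$, exponential decay of $H_{xy}$ (from $d_\ell(0,H)<\infty$) and of $P_{xy}$ makes $\tilde H$ pseudo-local. The Combes-Thomas estimate---in its form for operators with exponentially decaying off-diagonal matrix elements---applied to $\tilde H$ then yields $\|(\tilde H - z)^{-1}_{xy}\| \leq c_1 e^{-c_2 \|x-y\|}$ uniformly in $z = E+i\eta$ with $E \in \Delta'$, $\eta \neq 0$. Combined with the bound on $P$, this gives $\|(P^\perp R(z) P^\perp)_{xy}\| \leq C_1 |a_0(x)|^{-1} e^{-c_3 \|x-y\|}$ uniformly, and the same exponential decay persists after raising to the $s$-th power and integrating over $\Delta'$.

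Summing the two contributions, one gets $\int_{\Delta'}\|G(x,y;\cdot+i\eta)\|^s\,dE \leq C |a_0(x)|^{-s} e^{-c\|x-y\|}$ uniformly in $\eta$. Since exponential decay dominates any polynomial, the stated estimate follows for every $\alpha\in\NN$; the required $b\in\ell^1$ is obtained by first assuming (without loss of generality) that $|a_0(x)|\leq 1$ for all $x$---replacing $a_0$ by $\min(a_0,1)$ only weakens the constants but preserves $\ell^1$---so that $|a_0|^{-s}\leq|a_0|^{-1}$, and then taking $b:=a_0$. The principal technical hurdle is the Combes-Thomas step for $\tilde H$: although the classical argument assumes finite-range hopping, it admits an extension to operators with exponentially decaying off-diagonal elements via conjugation by a multiplicative exponential weight whose rate must be simultaneously smaller than the decay rate of $\tilde H$ and proportional to $\delta$; this is standard but needs to be executed with care to obtain bounds uniform in $\eta$.
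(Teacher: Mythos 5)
Your treatment of the inner piece $PR(z)P$ is correct and genuinely different from the paper's route: the paper diagonalizes $H$ on $\Delta$ in a SULE eigenbasis (extracted from the finite-degeneracy hypothesis via \cite{EGS_2005}) and sums over eigenfunctions, whereas you control the total variation of the spectral measure $\mu_{xy}$ on $\Delta$ directly by testing \cref{eq:dynamical constraint for an insulator} against simple functions in $B_1(\Delta)$, then apply Jensen and the bound $\sup_{\lambda,\eta}\int_{\Delta'}|\lambda-E-\ii\eta|^{-s}\dif{E}<\infty$. That is cleaner than the SULE machinery (modulo a harmless factor of $N^2$ when passing from scalar to matrix-valued total variation), and your $\min(|a_0|,1)$ device for turning $|a_0(x)|^{-s}$ into $|b(x)|^{-1}$ with $b\in\ell^1$ is fine.

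The outer piece, however, has a genuine gap. The Combes--Thomas estimate you invoke for $\tilde H=HP^\perp+\lambda_0P$ requires \emph{uniform} off-diagonal exponential decay, $\norm{\tilde H_{xy}}\leq C\ee^{-\mu\norm{x-y}}$ with $C$ independent of $x$: the proof conjugates by an exponential weight and needs the Holmgren/Schur sum $\sup_x\sum_y\norm{\tilde H_{xy}}(\ee^{\ve\norm{x-y}}-1)$ to be small. But $P=\chi_\Delta(H)$ is only \emph{weakly} local: \cref{eq:dynamical constraint for an insulator} gives $\norm{P_{xy}}\leq C_0|a_0(x)|^{-1}\ee^{-\mu_0\norm{x-y}}$ with $|a_0(x)|^{-1}\to\infty$ (since $a_0\in\ell^1$), and the paper explicitly stresses that uniform decay of $P_{xy}$ cannot be expected in the mobility-gap regime. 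Consequently $\tilde H=H+(\lambda_0-H)P$ inherits the diverging diagonal weight, the relevant Schur sums are not finite under the available estimates, and the ``standard extension'' of Combes--Thomas you appeal to does not apply; even if some weakly-local analogue held, it could not deliver the uniform bound $c_1\ee^{-c_2\norm{x-y}}$ you claim. This is precisely why the paper uses a \emph{smooth} cutoff $\vf$ with $\vf|_{\Delta'}=1$ and $\supp\vf\subseteq\Delta$ instead of the sharp $\chi_\Delta$: the outer piece $(\Id-\vf(H))R(z)$ is then a smooth, compactly supported function of $H$ alone, and Helffer--Sj\"ostrand gives off-diagonal polynomial decay of any order, uniformly as $\Im\{z\}\to0$, without ever touching $P$. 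Your inner-piece argument survives verbatim with $\chi_\Delta$ replaced by $\vf$ there (the measure $\vf\,\dif\mu_{xy}$ still has total variation controlled by the $B_1(\Delta)$ bound), so the repair is to adopt the partition $1=\vf+(1-\vf)$ and treat the outer piece by smooth functional calculus rather than by Combes--Thomas on $\tilde H$.
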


While this is anyway an almost-sure consequence of the probabilistic fractional moment condition (see \cref{prop:Energy avg of Greens function decays uniformly in height from real axis}), one can prove this directly from \cref{def:insulator}. We will crucially use \cref{lem:Energy avg of Greens function decays uniformly in height from real axis} to measure the proximity of spectral projections of two insulators.

\subsection{The integer quantum Hall effect}

We take the integer quantum Hall effect as a convenient case study for topological insulators, since much is known about it already; we rely mainly on \cite{EGS_2005}. For the IQHE, one takes $d=2$, and the topological invariant, physically the transversal (Hall) conductivity, is given by the Kubo formula as the Chern number \begin{align}
	\mathcal{I}\ni H \mapsto \Chern(H) := 2 \pi \ii \tr \ve_{\alpha\beta} P P_{,\alpha} P_{,\beta}\in\ZZ\,.
	\label{eq:the Chern number}
\end{align}
Here $\ve_{\alpha\beta}$ is the anti-symmetric tensor (with Einstein summation), $A_{,\alpha} := -\ii[\Lambda_\alpha,A]$ is the non-commutative derivative of an operator $A$ with $\Lambda_\alpha := \Lambda(X_\alpha)$, $X_\alpha$ the position operator in direction $\alpha$ on $\ell^2(\ZZ^2)$, and $\Lambda:\ZZ\to\RR$ is any \emph{switch function} (in the sense of \cite{Elbau_Graf_2002}, that is, any measurable interpolation from zero on negative values to one on positive values with bounded variation)--the choice of $\Lambda$ does not influence the value of $\Chern$, but is fixed once and for all. Finally, $P:=\chi_{(-\infty,0)}(H)$ is the Fermi projection associated with $H$. \cite{EGS_2005} contains a proof of the fact that \cref{eq:the Chern number} is a well defined map.

Our main result is

\begin{thm}
	The map $\Chern:\mathcal{I}\to\ZZ$ is continuous.
	\label{thm:Chern number is continuous}
\end{thm}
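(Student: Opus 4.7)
The plan is to prove that $\Chern:\mathcal{I}\to\ZZ$ is locally constant, which (since $\ZZ$ carries the discrete topology) is equivalent to showing $\Chern:\mathcal{I}\to\RR$ is continuous. Concretely, I need to prove that for any sequence $(H_n)_n\subset\mathcal{I}(\Delta_0,C_0,a_0,\mu_0)$ with $H_n\to H\in\mathcal{I}(\Delta_0,C_0,a_0,\mu_0)$ in the $d_\ell$-topology, one has $\Chern(H_n)\to\Chern(H)$.

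First I would rewrite \cref{eq:the Chern number} in the position basis as an absolutely convergent triple lattice sum
\begin{align*}
\Chern(H) \;=\; -2\pi\ii\,\ve_{\alpha\beta}\sum_{x,y,z\in\ZZ^2}\big(\Lambda_\alpha(y)-\Lambda_\alpha(z)\big)\big(\Lambda_\beta(z)-\Lambda_\beta(x)\big)\,\tr_{\CC^N}\!\big(P_{xy}P_{yz}P_{zx}\big)\,,
\end{align*}
with $P:=\chi_{(-\infty,0)}(H)$. Since $\chi_{(-\infty,0)}\in B_1(\Delta)$, the dynamical localization estimate \cref{eq:dynamical constraint for an insulator} gives $\norm{P_{xy}}\le C_0|a_0(x)|^{-1}\ee^{-\mu_0\norm{x-y}}$ with \emph{the same constants} for every $H\in\mathcal{I}(\Delta_0,C_0,a_0,\mu_0)$---this is exactly the reason the uniformity is built into \cref{def:insulator}. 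Combined with the boundedness of the switch function $\Lambda$, each summand is dominated by a summable function of $(x,y,z)$ independent of $n$, setting the stage for dominated convergence.

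The problem thus reduces to showing pointwise convergence $(P_n)_{xy}\to P_{xy}$ for every fixed $x,y$, where $P_n:=\chi_{(-\infty,0)}(H_n)$. My approach is via Stone's formula: fix $\lambda_-$ strictly below the uniformly bounded spectra of $H$ and all the $H_n$ (possible by \cref{eq:estimate of operator norm by local metric}), and represent $P_{xy}$ as a limit $\eta\to 0^+$ of an integral of $\Im G(x,y;\lambda+\ii\eta)$ over $[\lambda_-,0]$, and analogously for $(P_n)_{xy}$. Away from the real axis, $G_n(x,y;z)\to G(x,y;z)$ pointwise in $z$ via the first-order resolvent identity $R_n(z)-R(z)=R_n(z)(H-H_n)R(z)$ together with $\norm{H_n-H}\to 0$. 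The crucial step is the exchange of the limits $\eta\to 0^+$ and $n\to\infty$, which I would accomplish through the $\eta$-uniform fractional moment bound \cref{eq:additional constraint for topological insulators} of \cref{lem:Energy avg of Greens function decays uniformly in height from real axis}: it supplies an integrable majorant for $\norm{G(x,y;\lambda+\ii\eta)}^s$ on a compact $\Delta'\subsetneq\Delta$ containing $0$, uniform in both $\eta$ and $n$, and a Hölder step then licenses dominated convergence in $\lambda$. Mild additional care is needed for the potential spectral atom at $0$, controlled by the uniform bound $\sup_{\lambda\in\Delta}\norm{\chi_{\{\lambda\}}(H)}_1<\infty$ built into \cref{def:insulator}.

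The hard part is precisely this last step: passing from $\eta>0$ to $\eta=0$ uniformly in $n$. A spectral gap would make it trivial---one could push a Riesz contour into the gap, where $\norm{G(z)}$ is uniformly bounded---but in the mobility gap regime $\norm{G(\lambda+\ii\eta)}$ blows up as $\eta\to 0^+$, and the only thing saving us is the $\eta$-uniform \emph{integrated} smallness of $G$ provided by \cref{lem:Energy avg of Greens function decays uniformly in height from real axis}, which itself encodes the localization hypothesis. Once pointwise matrix-element convergence is in hand, dominated convergence in the triple lattice sum of the first step closes the argument and yields $\Chern(H_n)\to\Chern(H)$.
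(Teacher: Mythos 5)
There is a genuine gap, and it sits exactly where you located the ``hard part'': the reduction to pointwise convergence $(P_n)_{xy}\to P_{xy}$ of the Fermi projections at the \emph{fixed} Fermi energy $0$ is false, and no amount of care with the Stone formula rescues it. In the mobility-gap regime the operators in $\mathcal{I}$ have dense point spectrum near $0$, so an arbitrarily small perturbation (small in $d_\ell$ and still lying in $\mathcal{I}$) can push an eigenvalue of $H$ across $0$. When that happens $P_n-P$ acquires a rank-one piece $\pm\psi\otimes\psi^\ast$ with $\psi$ an exponentially localized eigenvector, and $\norm{(P_n-P)_{xx}}\approx\|\psi(x)\|^2$ stays of order one at the localization center no matter how small $d_\ell(H_n,H)$ is. The finite-degeneracy hypothesis $\sup_{\lambda\in\Delta}\|\chi_{\{\lambda\}}(H)\|_1<\infty$ does not control this: it bounds the rank of the jump, not its size. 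Equivalently, in your Stone-formula step the obstruction is not the exchange of $\eta\to0^+$ with $n\to\infty$ on the interior of the integration interval, but the endpoint $\lambda=0$, where $\chi_{(-\infty,0)}$ is discontinuous and eigenvalues of the $H_n$ accumulate; \cref{lem:Energy avg of Greens function decays uniformly in height from real axis} only controls the Green function \emph{integrated over the real energy} and says nothing at a single energy.

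The paper's proof circumvents precisely this with an idea your proposal is missing: by \cref{prop:EGS_Chern_const_in_mob_gap} the Chern number is constant as the Fermi energy sweeps the mobility gap, so $\Chern(H)$ may be replaced by its average $\frac{1}{|\Delta'|}\int_{\Delta'}\Chern(H-\lambda\Id)\dif{\lambda}$. After this replacement one only needs smallness of the \emph{energy-averaged} difference $\int_{\Delta'}\norm{(P_\lambda-P'_\lambda)_{xy}}\dif{\lambda}$ (\cref{prop:continuity of Fermi projections on average}), which is exactly what the $\eta$-uniform, energy-integrated fractional-moment bound you quote is built to deliver; individual eigenvalue crossings become harmless because the averaging over the Fermi energy smooths the discontinuity of $\lambda\mapsto\chi_{(-\infty,\lambda)}$. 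Your remaining ingredients (contour representation of the projection, resolvent identity, Combes--Thomas away from the real axis, \cref{lem:Energy avg of Greens function decays uniformly in height from real axis} near it, and dominated convergence in the triple lattice sum) coincide with the paper's, so the argument is repairable by inserting the Fermi-energy averaging at the outset. A secondary caveat: the claim that the summand of the triple lattice sum is dominated by a summable function using only boundedness of $\Lambda$ and weak locality of $P$ is too quick, since the diagonal blow-up $|a(x)|^{-1}$ ruins absolute summability unless one exploits the support structure of the switch-function differences, as in the well-definedness proof of \cite{EGS_2005}.
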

Consequently, since $\Chern$ is $\ZZ$-valued, it is locally constant. This finally gives a concrete criterion to be able to tell when two Hamiltonians will have the same Chern number, without having to actually calculate it. As noted in \cref{rem:the local metric topology is too fine}, it is not the weakest possible criterion.

Since the Chern number of a Hamiltonian is defined through its associated Fermi projection, one would naively hope to bound $d_\ell(P,P')$ by $d_\ell(H,H')$. Not only does this turn out not to work, but the very definition of an insulator shows that we can't even hope to have $\norm{P_{xy}}$ decaying in $\norm{x-y}$ uniformly. It is even false if we relax the condition to merely asking that $\norm{(P-P')_{xy}}$ is small, and has some off-diagonal decay and diagonal blow-up. Indeed, the problem is that $\chi_{(-\infty,0)}$ is not a continuous function, and we are considering operators that precisely have spectrum near zero, so even considering just the diagonal element $\norm{(P-P')_{xx}}$ for fixed $x$, an arbitrarily small change from $H$ to $H'$ could make an eigenvalue jump over zero so that $\norm{(P-P')_{xx}}$ is one.

The way out is to mimic the probabilistic approach (which cures things by looking at averages, which has the effect of smoothening discontinuities), with a trick of averaging over the Fermi energy within the gap. Indeed, the point is that even though $\chi_{(-\infty,0)}$ is not a continuous function, $\int \chi_{(-\infty,\lambda)}\dif{\lambda}$ \emph{is}. Our main effort below is to make this intuitive argument rigorous.

The Fermi-energy averaging is permitted using the following key result:
\begin{prop}(\cite[Proposition 2]{EGS_2005}) Let $H\in\mathcal{I}$. Then according to \cref{def:insulator}, there is some $\Delta\in\Open{\RR}$ such that $0\in\Delta$ and for which the mobility gap estimates are fulfilled. Then the following map is constant: $$\Delta\ni E_F\mapsto\Chern(H-E_F\Id)\in\ZZ\,.$$
	\label{prop:EGS_Chern_const_in_mob_gap}
\end{prop}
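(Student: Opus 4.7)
The strategy is to establish local constancy of $E_F \mapsto \Chern(H - E_F \Id)$ on $\Delta$; since the target $\ZZ$ is discrete and $\Delta$ is a connected open interval, local constancy will force the map to be constant. The physical intuition driving the proof is that inside the mobility gap the Fermi projection changes with $E_F$ only through the addition of finite-rank eigenprojections onto exponentially localized eigenvectors, and each such localized projection contributes nothing to $\Chern$: for a normalized localized $\psi$, the quantity $\tr\ve_{\alpha\beta}\ket{\psi}\bra{\psi}[\Lambda_\alpha,\ket{\psi}\bra{\psi}][\Lambda_\beta,\ket{\psi}\bra{\psi}]$ collapses after cyclic rearrangement to a symmetric-in-$(\alpha,\beta)$ combination of $\ip{\psi}{\Lambda_\alpha\psi}\ip{\psi}{\Lambda_\beta\psi}$, which therefore vanishes against $\ve_{\alpha\beta}$.

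Concretely, for $E_F < E_F'$ both in $\Delta$, I would set $P := \chi_{(-\infty,E_F)}(H)$, $P' := \chi_{(-\infty,E_F')}(H)$, and $Q := P' - P = \chi_{[E_F,E_F')}(H)$. The characteristic function $\chi_{[E_F,E_F')}$ belongs to $B_1(\Delta)$, so the dynamical estimate \cref{eq:dynamical constraint for an insulator} yields the decay $\norm{Q_{xy}} \leq C |a(x)|^{-1} \ee^{-\mu\norm{x-y}}$. Substituting $P' = P + Q$ into $\Chern(P') = 2\pi\ii\tr\ve_{\alpha\beta}P'P'_{,\alpha}P'_{,\beta}$ and expanding produces eight trace terms. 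The goal is to collapse $\Chern(P') - \Chern(P)$ to the purely-$Q$ trace $2\pi\ii\tr\ve_{\alpha\beta}QQ_{,\alpha}Q_{,\beta}$, and then to show that this vanishes by decomposing $Q$ into the eigenprojections of $H$ inside $[E_F,E_F')$ (finite-rank thanks to the first bullet of \cref{def:insulator}, with exponentially localized eigenvectors from \cref{eq:dynamical constraint for an insulator}) and applying dominated convergence together with the single-eigenvector calculation above.

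The algebraic collapse of the seven cross terms uses the linearity $(P+Q)_{,\alpha} = P_{,\alpha} + Q_{,\alpha}$, cyclicity of the trace, the orthogonality $PQ = QP = 0$ arising from the disjoint spectral supports of $P$ and $Q$, the projection identities $PP_{,\alpha}P = 0$ and $QQ_{,\alpha}Q = 0$ (obtained by differentiating $P^2=P$ and $Q^2 = Q$ and multiplying on both sides by the projection), and the antisymmetry of $\ve_{\alpha\beta}$. A careful pairwise tally shows that every cross term cancels another. I expect no surprises in this bookkeeping once the trace-class status of each term is established.

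The main obstacle — and the reason the intuitive picture requires care — is that $P_{,\alpha}$ and $Q_{,\alpha}$ are not individually trace class, and the relevant triple products acquire trace-class status only through the subtle cancellations made precise by the switch-function formalism of \cite{EGS_2005}. The decay of $Q_{xy}$ from \cref{eq:dynamical constraint for an insulator}, combined with the $\ell^1$-summability of $a(x)^{-1}$ against the kernels of $P_{,\alpha}$ and $Q_{,\alpha}$ (which are localized near the graph of the switch function $\Lambda$), is exactly what closes the argument; since this is essentially the content of \cite[Proposition 2]{EGS_2005} transcribed into the present notation, I would invoke that result rather than redo the full bookkeeping.
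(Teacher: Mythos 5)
The paper offers no proof of this proposition at all --- it is imported verbatim as \cite[Proposition 2]{EGS_2005} --- and your sketch correctly reconstructs the strategy of that cited proof: the difference of Fermi projections $Q=\chi_{[E_F,E_F')}(H)\in B_1(\Delta)$-functional-calculus is an orthogonal sum of finite-rank, exponentially localized eigenprojections; the trace formula is additive over such orthogonal decompositions; and each single localized eigenprojection contributes a quantity symmetric in $(\alpha,\beta)$, hence zero against $\ve_{\alpha\beta}$. Since you ultimately defer the trace-class bookkeeping to the same reference the paper cites, this is essentially the same approach and is correct.
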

We can then formulate precisely in which sense is $P-P'$ small given that $H-H'$ is small:
\begin{prop}
	Let $H,H'\in\mathcal{I}$ with $d_\ell(H,H')<\infty$. Let $\Delta'$ be a compact interval contained in the localization estimate interval (\cref{def:insulator}) of both $H$ and $H'$. Define $$P_{\lambda} := \chi_{(-\infty,0)}(H-\lambda\Id)=\chi_{(-\infty,\lambda)}(H)\,.$$ Then we have some $s\in(0,1)$, $C<\infty$, $a\in\ell^1(\ZZ^d)$ (dependent on $H,H'$) such that for all $\alpha\in\NN$, \begin{align}\int_{\Delta'}\norm{(P_{\lambda}-P'_{\lambda})_{xy}}\dif{\lambda}\leq C d_\ell(H,H')^{s}|a(x)|^{-1}(1+\norm{x-y})^{-\alpha}\qquad(x,y\in\ZZ^d)\,.\end{align}
	\label{prop:continuity of Fermi projections on average}
\end{prop}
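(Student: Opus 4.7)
The strategy is to combine Stone's formula, the second resolvent identity, and the fractional-moment bound of \cref{lem:Energy avg of Greens function decays uniformly in height from real axis} after averaging $\lambda$ over $\Delta'$.

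Choose $M>0$ with $-M<\inf\sigma(H)\wedge\inf\sigma(H')$. Stone's formula gives
\begin{align*}
    P_\lambda-P'_\lambda=\frac{1}{\pi}\lim_{\eta\downarrow 0}\int_{-M}^{\lambda}\Im\bigl(R(E+\ii\eta)-R'(E+\ii\eta)\bigr)\,\dif E
\end{align*}
strongly, and the second resolvent identity $R-R'=R(H'-H)R'$ yields
\begin{align*}
    (R-R')_{xy}(z)=\sum_{u,v\in\ZZ^d}G(x,u;z)\,(H'-H)_{uv}\,G'(v,y;z)\,.
\end{align*}
Pushing the norm inside, integrating $\lambda$ over $\Delta'$, and swapping with the $E$-integral by Fubini bounds the left-hand side of the proposition by
\begin{align*}
    \frac{|\Delta'|}{\pi}\liminf_{\eta\downarrow 0}\sum_{u,v\in\ZZ^d}\int_{-M}^{\sup\Delta'}\norm{G(x,u;E+\ii\eta)}\,\norm{(H-H')_{uv}}\,\norm{G'(v,y;E+\ii\eta)}\,\dif E\,.
\end{align*}
Splitting the $E$-integration into $[-M,\inf\Delta']$ (where uniform-in-$\eta$ Combes--Thomas bounds apply, since $E$ is at a positive distance from the ``bad'' spectral region) and $\Delta'$, the former contribution is easily absorbed.

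The main work is inside $\Delta'$, where only fractional $L^s$-control ($s<1$) of $G,G'$ is available. To convert the integrand $\norm{G}\norm{G'}$---not uniformly $L^1_E$ as $\eta\downarrow 0$---into fractional powers, I would write $\norm{G}=\norm{G}^s\norm{G}^{1-s}$ and use $\norm{G},\norm{G'}\leq\eta^{-1}$ on the $1-s$ pieces, together with $\norm{(H-H')_{uv}}^{1-s}\leq(2\norm{H-H'})^{1-s}$; the residual $\eta^{-2(1-s)}$ blowup is absorbed through a dyadic truncation of $H-H'$ by $\norm{u-v}$ (in the spirit of \cite{EGS_2005}) that is lifted together with the $\eta\downarrow 0$ limit. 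What survives is a bound of the form
\begin{align*}
    \sum_{u,v\in\ZZ^d}\norm{(H-H')_{uv}}^{s}\int_{\Delta'}\norm{G(x,u;E)}^{s}\norm{G'(v,y;E)}^{s}\dif E\,,
\end{align*}
to which Cauchy--Schwarz in $E$ combined with \cref{lem:Energy avg of Greens function decays uniformly in height from real axis} applied with exponent $2s<1$ and a large polynomial weight $\alpha'$ yields polynomial decay in $\norm{x-u}$ and $\norm{v-y}$.

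To finish, \cref{lem:meaning of local norm for approximating exp decay} supplies $\norm{(H-H')_{uv}}^s\leq d_\ell(H,H')^s\ee^{-s\norm{u-v}/d_\ell(H,H')}$, which is the source of the $d_\ell(H,H')^s$ factor. The desired polynomial weight $(1+\norm{x-y})^{-\alpha}$ is obtained by distributing the triangle-type inequality $(1+\norm{x-y})^\alpha\leq 3^\alpha[(1+\norm{x-u})^\alpha+(1+\norm{u-v})^\alpha+(1+\norm{v-y})^\alpha]$: the first and third pieces are absorbed into the polynomial weight of \cref{lem:Energy avg of Greens function decays uniformly in height from real axis} by choosing $\alpha'$ large enough, while the middle piece is absorbed into the exponential decay of $\norm{(H-H')_{uv}}^s$. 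Summability over $(u,v)$ and the resulting $\ell^1$-weight $|a(x)|^{-1}$ are inherited from the weight $|b(x)|^{-1}$ appearing in \cref{lem:Energy avg of Greens function decays uniformly in height from real axis}. The sole non-trivial step is the dyadic cutoff argument used to exchange $\eta\downarrow 0$ with the non-integrable product $\norm{G}\norm{G'}$; every other step is bookkeeping via Fubini, the triangle inequality, and the decay hypotheses on $H-H'$.
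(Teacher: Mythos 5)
Your overall toolkit is the right one (second resolvent identity, Combes--Thomas away from the problematic region, fractional powers plus Cauchy--Schwarz plus \cref{lem:Energy avg of Greens function decays uniformly in height from real axis} near it, and \cref{lem:meaning of local norm for approximating exp decay} to produce the $d_\ell(H,H')^s$ factor), but the choice of Stone's formula as the starting point creates two gaps that I do not see how to close. First, Stone's formula forces you to integrate $\Im\bigl(R(E+\ii\eta)-R'(E+\ii\eta)\bigr)$ over the whole segment $E\in[-M,\lambda]$ \emph{at height $\eta\downarrow0$}. The sub-segment $[-M,\inf\Delta']$ is not at positive distance from the spectrum: an insulator in the sense of \cref{def:insulator} has its bulk bands below the mobility gap $\Delta$, so for $E$ in that range $\dist{E+\ii\eta,\sigma(H)}=O(\eta)$, the Combes--Thomas bound degenerates, and no localization estimate is available there either (\cref{eq:dynamical constraint for an insulator} and \cref{lem:Energy avg of Greens function decays uniformly in height from real axis} only control energies inside $\Delta$). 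Your claim that this contribution ``is easily absorbed'' is therefore false as stated. This is precisely why the paper's proof replaces Stone's formula by a rectangular contour $\Gamma(\lambda)$: the horizontal pieces at $\Im z=\pm1$ and the left vertical piece stay at distance $\geq1$ from $\sigma(H)\cup\sigma(H')$ (so Combes--Thomas applies uniformly), and the contour crosses the real axis only at $\Re z=\lambda\in\Delta'$, where localization takes over.

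Second, on $\Delta'$ itself your factorization $\norm{G}\norm{G'}\leq\eta^{-2(1-s)}\norm{G}^{s}\norm{G'}^{s}$ leaves a divergent prefactor $\eta^{-2(1-s)}$ with no integration over $\eta$ to tame it, and the proposed remedy --- a dyadic truncation of $H-H'$ in $\norm{u-v}$ ``lifted together with the $\eta\downarrow0$ limit'' --- does not couple $\eta$ to anything that could compensate a negative power of $\eta$; as written this step fails. In the paper's contour argument the height $\eta$ is the \emph{integration variable} along the right vertical segment, so after extracting $\norm{G-G'}^{1-s/2}\leq|2/\eta|^{1-s/2}$ one is left with $\int_{-1}^{1}|2/\eta|^{1-s/2}\dif{\eta}<\infty$, and the remaining fractional power $\norm{G-G'}^{s/2}$ is expanded by the resolvent identity and handled exactly as in your final display (Cauchy--Schwarz in $\lambda$, \cref{lem:Energy avg of Greens function decays uniformly in height from real axis}, and the exponential decay of $\norm{(H-H')_{uv}}^{s/2}$). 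So the endgame of your proposal is essentially the paper's, but the route to it --- exchanging $\eta\downarrow0$ with the $E$-integral across the bulk spectrum and through a non-integrable singularity --- is where the actual difficulty lies, and it is left unresolved.
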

\begin{proof}
	We replace the disorder averaging of \cite[Proposition 5.2]{PRODAN20161150} with Fermi-energy averaging. We start from the formula \cite{Aizenman_Graf_1998} $$ P_\lambda = \frac{\ii}{2\pi}\int_{\Gamma(\lambda)} R(z)\dif{z} $$ where $\Gamma(\lambda)$ is a rectangular curve in $\CC$ going counter-clockwise passing the points $\lambda+\ii,\lambda-\ii,-\norm{H}-1-\ii,-\norm{H}-1+\ii$. We divide the curve into two parts: $\Gamma_1(\lambda)$ which is the two horizontal segments and the left vertical segment, and $\Gamma_2(\lambda)$, the right vertical segment. On $\Gamma_1(\lambda)$, $z$ is always a minimum distance of 1 from $\sigma(H)$ so that one may use the Combes-Thomas estimate. On $\Gamma_2(\lambda)$ we must use localization, since we (possibly) cross the spectrum as we pass the real axis.
	
	We thus find: \begin{align*}
		\norm{(P_{\lambda}-P'_{\lambda})_{xy}} &= \norm{(\frac{\ii}{2\pi}\int_{\Gamma(\lambda)} R(z)\dif{z}-\frac{\ii}{2\pi}\int_{\Gamma(\lambda)} R'(z)\dif{z})_{xy}} \\
		&\leq \norm{\frac{\ii}{2\pi}\int_{\Gamma_1(\lambda)} (G(x,y;z)- G'(x,y;z))\dif{z}}+\norm{\frac{\ii}{2\pi}\int_{\Gamma_2(\lambda)} (G(x,y;z)- G'(x,y;z))\dif{z}}
	\end{align*}
	Then with the resolvent identity \begin{align*}
		\norm{\frac{\ii}{2\pi}\int_{\Gamma_1(\lambda)} (G(x,y;z)- G'(x,y;z))\dif{z}} &\leq \frac{1}{2\pi}\int_{\Gamma_1(\lambda)} \norm{G(x,y;z)- G'(x,y;z)}|\dif{z}| \\
		&\leq \frac{1}{2\pi}\int_{\Gamma_1(\lambda)} \sum_{x',x''}\norm{G(x,x';z)}\norm{(H'-H)_{x',x''}}\norm{G'(x'',y;z)}|\dif{z}|
	\end{align*}
	Using \cref{lem:meaning of local norm for approximating exp decay} and the Combes-Thomas estimate (for some universal $\mu>0$) $$\norm{G(x,y;z)}\leq \frac{2}{\dist{z,\sigma(H)}}\ee^{-\mu \dist{z,\sigma(H)} \norm{x-y} }$$ we now estimate this (recalling that for $z\in\Gamma_1(\lambda)$, $\dist{z,\sigma(H)}\geq 1$) \begin{align*}
		\dots & \leq  \frac{2}{\pi}\int_{\Gamma_1(\lambda)} \sum_{x',x''}\ee^{-\mu\norm{x-x'}}d_\ell(H,H')\ee^{-\norm{x'-x''}/d_\ell(H,H')}\ee^{-\mu\norm{x''-y}}|\dif{z}| \\
		& \leq \frac{2}{\pi} |\Gamma_1(\lambda)| (\coth(\mu/2))^{2d} d_\ell(H,H')  \ee^{-\frac{1}{2}\min(\mu,d_\ell(H,H')^{-1})\norm{x-y}}
 	\end{align*}
 	We see that $\Gamma_1(\lambda)$ apparently doesn't require the averaging over energy.
 	
 	On the other hand for $\Gamma_2(\lambda)$, we do use the localization estimate, which needs the Fermi energy averaging; this is where \cref{lem:Energy avg of Greens function decays uniformly in height from real axis} comes in. Let $s\in(0,1)$. Then using the basic estimate $\norm{G(x,y;\lambda+\ii\eta)}\leq |\eta|^{-1}$ we find
 	\begin{align*}
 		& \int_{\Delta'}\norm{\frac{\ii}{2\pi}\int_{\Gamma_2(\lambda)} (G(x,y;z)- G'(x,y;z))\dif{z}}\dif{\lambda}=\\ &= \int_{\Delta'}\norm{\frac{-1}{2\pi}\int_{-1}^{1} (G(x,y;\lambda+\ii\eta)- G'(x,y;\lambda+\ii\eta))\dif{\eta}}\dif{\lambda} \\
 		&\leq \frac{1}{2\pi}\int_{-1}^{1}\int_{\Delta'} \norm{G(x,y;\lambda+\ii\eta)- G'(x,y;\lambda+\ii\eta)}\dif{\lambda}\dif{\eta} \\ 
 		&\leq \frac{1}{2\pi}\int_{-1}^{1}\int_{\Delta'} |\frac{2}{\eta}|^{1-s/2}\norm{G(x,y;\lambda+\ii\eta)- G'(x,y;\lambda+\ii\eta)}^{s/2}\dif{\lambda}\dif{\eta}
 	\end{align*}
 	Only now, after pulling a fractional power of the imaginary energy (unlike in \cite{PRODAN20161150}), do we use the resolvent identity: 
 	\begin{align*}
 		\dots &\leq \frac{1}{2\pi}\int_{-1}^{1}\int_{\Delta'} |\frac{2}{\eta}|^{1-s/2}\sum_{x',x''}\norm{G(x,x';\lambda+\ii\eta)}^{s/2}\norm{(H'-H)_{x',x''}}^{s/2}\norm{ G'(x'',y;\lambda+\ii\eta)}^{s/2}\dif{\lambda}\dif{\eta}
 	\end{align*}
 	We may pull out the $\sum_{x',x''}$ sum out of the integrals using Fatou. Use the Cauchy-Schwarz inequality on the $\int_{\Delta'}\cdot\dif{\lambda}$ integral to get  	\begin{align*}
 	\dots &\leq \frac{1}{2\pi}\sum_{x',x''}\int_{-1}^{1} |\frac{2}{\eta}|^{1-s/2}\dif{\eta}\norm{(H'-H)_{x',x''}}^{s/2}(\sup_{\eta\neq 0}\int_{\Delta'}\norm{G(x,x';\cdot+\ii\eta)}^{s})^{1/2}\times\\&\quad\times(\sup_{\eta\neq0}\int_{\Delta'}\norm{ G'(x'',y;\cdot+\ii\eta)}^{s})^{1/2}
 	\end{align*}
 	
 	At this point we employ the part of the assumption on $H,H'$ concerning \cref{eq:additional constraint for topological insulators}, so using $s>0$, we have for any $\alpha\in\NN$:
 	\begin{align*}
 	\dots &\leq \frac{2^{2-s/2}D_{0,\alpha}}{\pi s}d_\ell(H,H')^{s/2} \sum_{x',x''}  \ee^{-\frac{s}{2d_\ell(H,H')}\norm{x'-x''}} |b(x)|^{-1/2}(1+\norm{x-x'})^{-\alpha/2}|b(y)|^{-1/2}(1+\norm{x''-y})^{-\alpha/2}\,.
 	\end{align*}
 	
 	Using the triangle inequality $(1+\norm{x-y})^{-\alpha}(1+\norm{y-z})^{-\alpha} \leq (1+\norm{x-z})^{-\alpha}$ we can pull out a polynomially-decaying factor in $\norm{x-y}$ to get: 
 	
 	\begin{align*}
 	\dots &\leq \frac{2^{2-s/2}D_{0,\alpha} Q}{\pi s}d_\ell(H,H')^{s/2} |b(y)|^{-1/2}(1+\norm{x-y})^{-\alpha/2} 
 	\end{align*}
 	with $$ Q := \sum_{x',x''}  \ee^{-\frac{s}{2d_\ell(H,H')}\norm{x'-x''}} |b(x)|^{-1/2}(1+\norm{x-x'})^{-\alpha/2}|b(y)|^{-1/2}(1+\norm{x''-y})^{-\alpha/2}\,.$$
 	
 	But $Q<\infty$ manifestly, so we get our result.
\end{proof}

\begin{proof}[Proof of \cref{thm:Chern number is continuous}]
	Let $H\in\mathcal{I}$ be given. We seek some $\ve>0$ (dependent on $H$) such that if $H'\in \mathcal{I}$ with $d_\ell(H,H')<\ve$, then $\Chern(H)=\Chern(H')$.
	
	By \cref{prop:EGS_Chern_const_in_mob_gap} we may replace the Chern number with its average within the mobility gap to get
	\begin{align*}
	&|\Chern(H)-\Chern(H')|\leq\\ 
	& \leq\frac{1}{|\Delta'|}|\int_{\Delta'}\Chern(H-\lambda\Id)\dif{\lambda}-\int_{\Delta'}\Chern(H'-\lambda\Id)\dif{\lambda}|\\
	& \leq \frac{2\pi}{|\Delta'|}\int_{\Delta'}|\tr(\ve_{\alpha\beta}(P_\lambda P_{\lambda,\alpha} P_{\lambda,\beta}-P'_\lambda P'_{\lambda,\alpha} P'_{\lambda,\beta}))|\dif{\lambda} \\
	& \leq \frac{2\pi}{|\Delta'|} \sum_{(\alpha,\beta) = (1,2),(2,1)} \int_{\Delta'}\norm{(P_\lambda-P'_\lambda)P_{\lambda,\alpha} P_{\lambda,\beta})}_1 + \norm{P'_\lambda(P_\lambda-P'_\lambda)_{,\alpha} P_{\lambda,\beta})}_1 + \norm{P'_\lambda P'_{\lambda,\alpha} (P_\lambda-P'_\lambda)_{,\beta}}_1\dif{\lambda}\,.
	\end{align*}
	
	We will use the estimate \begin{align*}
	\norm{AB}_1 \leq \sum_{xyz} \norm{A_{xy}}\norm{B_{yz}}\,.
	\end{align*}
	
	Consider the first term with $\alpha\neq\beta$: \begin{align*}
	\int_{\Delta'}\norm{(P_\lambda-P'_\lambda)P_{\lambda,\alpha} P_{\lambda,\beta})}_1 \dif{\lambda} & \leq \int_{\Delta'} \sum_{xyz}\norm{(P_\lambda-P'_\lambda)_{xy}}\norm{(P_{\lambda,\alpha} P_{\lambda,\beta})_{yz}}\dif{\lambda}\,.
	\end{align*}
	
	Using \cref{eq:dynamical constraint for an insulator} we know that $P_\lambda$ have off-diagonal decay with diagonal explosion, which was called "weakly-local" in \cite{Shapiro_Tauber_BEC}. Since $\chi_{(-\infty,\lambda)}\in B_1(\Delta)$, the "weakly-local" estimate we get does not depend on $\lambda$. Using \cite[Remark 3.4]{Shapiro_Tauber_BEC} we estimate $\norm{(P_{\lambda,\alpha} P_{\lambda,\beta})_{yz}}\leq C (1+\norm{y-z})^{-\alpha}(1+\norm{z})^{-\alpha}$ for some $\alpha\in\NN$ as large as we want, and the constant $C$ does not depend on $\lambda$. We conclude now using \cref{prop:continuity of Fermi projections on average} that 
	
	\begin{align*}
	\int_{\Delta'}\norm{(P_\lambda-P'_\lambda)P_{\lambda,\alpha} P_{\lambda,\beta})}_1 \dif{\lambda} & \leq \sum_{xyz}C (1+\norm{y-z})^{-\alpha}(1+\norm{z})^{-\alpha}\int_{\Delta'} \norm{(P_\lambda-P'_\lambda)_{xy}}\dif{\lambda} \\ 
	&\leq \sum_{xyz}C (1+\norm{y-z})^{-\alpha}(1+\norm{z})^{-\alpha} \times \\
	&\qquad\times \frac{2^{2-s/2}D_{0,\alpha} Q}{\pi s}d_\ell(H,H')^{s/2} |b(y)|^{-1/2}(1+\norm{x-y})^{-\alpha/2} \,.
	\end{align*}
	
	This last expression after the triple sum is summable so that we get some constant times $d_\ell(H,H')^{s/2}$, which means we can make this term as small as we like by appropriate choice of $H'$.

	Consider now one of the derivative terms with $\alpha\neq\beta$, where we again use \cref{eq:dynamical constraint for an insulator} with estimates independent of $\lambda$:
	
	\begin{align*}
		\int_{\Delta'}\norm{P'_\lambda(P_\lambda-P'_\lambda)_{,\alpha} P_{\lambda,\beta})}_1 \dif{\lambda}&\leq\int_{\Delta'}\norm{(P_\lambda-P'_\lambda)_{,\alpha} P_{\lambda,\beta})}_1 \dif{\lambda} \\
		&\leq \int_{\Delta'}\sum_{xyz}\norm{((P_\lambda-P'_\lambda)_{,\alpha})_{xy}}\norm{( P_{\lambda,\beta})_{yz}} \dif{\lambda} \\
		&\leq \sum_{xyz}C(1+\norm{y-z})^{-\alpha}|a(y)|^{-1}(1+|y_\beta|)^{-\alpha}\int_{\Delta'}\norm{((P_\lambda-P'_\lambda)_{,\alpha})_{xy}}  \dif{\lambda}  
	\end{align*}
	
	Now we have $\norm{(A_{,\alpha})_{xy}}=\norm{[\Lambda_\alpha,A]_{xy}}=\norm{(\Lambda(x_\alpha)-\Lambda(y_\alpha))A_{xy}}$. We can now invoke \cite[Proof of Lemma 2]{Graf_Shapiro_2018_1D_Chiral_BEC} to bound $|\Lambda(x_\alpha)-\Lambda(y_\alpha)|\leq C_\Lambda(1+|x_\alpha-y_\alpha|)^{\mu}(1+\frac{1}{2}|x_\alpha|)^{-\mu}$ to get, using \cref{prop:continuity of Fermi projections on average} again:
	
	\begin{align*}
	\int_{\Delta'}\norm{P'_\lambda(P_\lambda-P'_\lambda)_{,\alpha} P_{\lambda,\beta})}_1 \dif{\lambda}&\leq \sum_{xyz}C(1+\norm{y-z})^{-\alpha}|a(y)|^{-1}(1+|y_\beta|)^{-\alpha}\times\\&\qquad\times C_\Lambda(1+|x_\alpha-y_\alpha|)^{\mu}(1+\frac{1}{2}|x_\alpha|)^{-\mu}\int_{\Delta'}\norm{(P_\lambda-P'_\lambda)_{xy}}  \dif{\lambda} \\
	&\leq \sum_{xyz}C(1+\norm{y-z})^{-\alpha}|a(y)|^{-1}(1+|y_\beta|)^{-\alpha} \times\\
	&\qquad\times C_\Lambda(1+|x_\alpha-y_\alpha|)^{\mu}(1+\frac{1}{2}|x_\alpha|)^{-\mu}\times\\
	&\qquad\times\frac{2^{2-s/2}D_{0,\alpha} Q}{\pi s}d_\ell(H,H')^{s/2} |b(y)|^{-1/2}(1+\norm{x-y})^{-\alpha/2}\,.
	\end{align*}
	
	This last expression is unfortunately very long but the point is (when the dust settles) that it really is just a summable expression after the triple sum, so that we again get a constant times $d_\ell(H,H')^{s/2}$, which we can make as small as we like (there will still be a $\coth$ dependence on $d_\ell(H,H')$ coming from the sum, as in \cref{eq:estimate of operator norm by local metric}, and similarly that dependence approaches $d_\ell(H,H')\to0$). The last derivative term is dealt with in the same manner, and we find our result.
\end{proof}

In concluding this proof of continuity of $\Chern$, we compare it to the \emph{probabilistic} proof of \cite[Proposition 5.2]{PRODAN20161150}. In short, the latter proof shows that if $[0,1]\ni t\mapsto H(t)$ is a family of random ergodic Hamiltonians with $t$ the parameter of deformations, then $t\mapsto\EE[\Chern(H(t))]$ is locally constant. Since we know that $\EE[\Chern(H(t))]$ is almost-surely equal to $\Chern(H(t))$ (by Birkhoff), we conclude that almost-surely, if $|t-s|$ is small, $\Chern(H(t))=\Chern(H(s))$ (in the setting of \cite{PRODAN20161150}, the family $t\mapsto H(t)$ varies within one and the same random probability space so that it makes sense to compare random configurations at different values of $t$).

In contrast, \cref{thm:Chern number is continuous} shows that if the mobility gap property holds for two given realizations (independently) of near-by Hamiltonians (as measured by $d_\ell$), then their respective Chern numbers are necessarily equal. Hence, now we know it is impossible to toss some coins in a very lucky way in the laboratory and avoid the local constancy of the Chern number, and we know that the key topological property, namely the local constancy of the Chern number, is unrelated, and does not rely on covariance.

\begin{rem}
	In \cite{Shapiro_Tauber_BEC}, it is shown that the topological invariant for mobility-gapped Floquet 2D systems with no symmetry is invariant under selection of the logarithm branch cut within the mobility gap. This means that it might be possible for the proof above to be adapted for such systems. Indeed, apparently the crucial ingredients are statements such as \cref{prop:EGS_Chern_const_in_mob_gap} and a rewriting of the invariant in terms of contour integrals on resolvents, which allows for resolvent identities to be used. Part of \cite[Theorem 2.1]{Shapiro_Tauber_BEC} is the analog of \cref{prop:EGS_Chern_const_in_mob_gap}, though it is only through \cite[Theorem 2.6]{Shapiro_Tauber_BEC} that this is really established. Coincidentally an analog of \cref{prop:EGS_Chern_const_in_mob_gap} is precisely what we do not have for the chiral 1D systems studied in \cite{Graf_Shapiro_2018_1D_Chiral_BEC}.
	
	In contrast, there is little hope to control the local constancy of $\ZZ_2$ invariants associated to time-reversal invariant systems using the approach presented here, since an \emph{even} jump between two invariants should be allowed by continuous deformations.
\end{rem}

\section{Signatures of localization}\label{sec:signatures of localization}
In this section we justify the somewhat awkward \cref{def:insulator}.
Let $H\in\BLO{\HH}$ be a given random (ergodic) Hamiltonian. Let $\Delta\in\Open{\RR}$ be a given bounded interval. The fractional moments condition on $\Delta$ \cite[Lemma 2.1]{aizenman1993} says that for Lebesgue-almost-all $E\in\Delta$ there is some fraction $s_E\in(0,1)$ and constants $C_E<\infty,\mu_E>0$ such that \begin{align}
	\sup_{\eta\neq0}\EE[\norm{G(x,y;E+\ii\eta)}^{s_E}]\leq C_E \ee^{-\mu_E\norm{x-y} }\qquad(x,y\in\ZZ^d)\,.
\label{eq:the FMC}
\end{align}
Here $\EE$ is the disorder averaging and the other notation symbols are as in the preceding sections.

A further condition \cite[Eq. (4)]{Graf1994} that does not seem to follow automatically from \cref{eq:the FMC} (see \cite{Rajinder_Mavi_Jeffrey_Schenker2018}), but rather requires more input from $H$ is that for all $E\in\Delta$ there are constants $C_E<\infty,\mu_E>0$ (here and below, these constants are different for each constraint) such that \begin{align}
\sup_{\eta\neq0}\eta\EE[\norm{G(x,y;E+\ii\eta)}^2]\leq C_E \ee^{-\mu_E\norm{x-y} }\qquad(x,y\in\ZZ^d)\,.
\label{eq:the second moment condition}
\end{align}
Two very important consequences of these two conditions for topological insulators appeared in \cite{Aizenman_Graf_1998}. \cref{eq:the FMC} was shown to imply that for all $E\in\Delta$, there are constants $C_E<\infty,\mu_E>0$ such that \begin{align}
	\EE[\norm{\chi_{(-\infty,E)}(H)_{xy}}]\leq C_E \ee^{-\mu_E\norm{x-y} }\qquad(x,y\in\ZZ^d)\,.
	\label{eq:decay of Fermi projection}
\end{align}
\cref{eq:the second moment condition} in turn was shown to imply that there are constants $C<\infty,\mu>0$ such that \begin{align}
\EE[\sup_{f\in B_1(\Delta)}\norm{f(H)_{xy}}]\leq C \ee^{-\mu\norm{x-y} }\qquad(x,y\in\ZZ^d)\,.
\label{eq:decay of the Borel bounded functional calculus}
\end{align}
With $B_1(\Delta)$ as in \cref{def:insulator}. We note that since $\chi_{(-\infty,E)}\in B_1(\Delta)$, \cref{eq:decay of the Borel bounded functional calculus} implies \cref{eq:decay of Fermi projection}. Also, this implies (see e.g. \cite[Prop. A1]{Shapiro_Tauber_BEC}) that almost-surely, there is a (random) constant $C<\infty$ and (deterministic) $\mu>0,a\in\ell^1(\ZZ^d)$ such that \begin{align}
\sup_{f\in B_1(\Delta)}\norm{f(H)_{xy}}\leq C |a(x)|^{-1}\ee^{-\mu\norm{x-y} }\qquad(x,y\in\ZZ^d)\,.
\label{eq:deterministic decay of the Borel bounded functional calculus}
\end{align} as in, e.g., \cite[Eq. (1.2)]{EGS_2005}. 

Moreover, the Kubo formula \cref{eq:the Chern number} clearly shows that \cref{eq:deterministic decay of the Borel bounded functional calculus} implies the longitudinal DC conductivity is finite and zero, hence it makes sense to include it as the first part of the definition of an insulator,  \cref{def:insulator}. The second part, which concerns the finite degeneracy of localized eigenvalues within the mobility gap, is again a standard almost-sure consequence of models where localization is established \cite{Simon94} and we include it here for technical reasons, following \cite{EGS_2005}.

\begin{proof}[Proof of \cref{lem:Energy avg of Greens function decays uniformly in height from real axis}]
	First we note that via the RAGE theorem \cref{def:insulator} implies that $H$ has pure point spectrum within $\Delta$. 
	
	In \cite[Proof of Lemma 4]{EGS_2005}, it is proven that \cref{def:insulator} implies that $H$ has a SULE eigenbasis within $\Delta$, in the sense of \cite[eq-n (7.1)]{delRioJitomirskayaLastSimon1996}, i.e., we will use that there is some set of normalized eigenvectors $\Set{\psi_n}_{n\in\NN}$ with eigenvalues $\Set{\lambda_n}_{n\in\NN}\subseteq\Delta$ such that there is a constant $\mu>0$ such that for any $\ve>0$  there is some $C_\ve<\infty$ with $$\|\psi_n(x)\|\leq C_\ve \exp(-\mu\|x-x_n\|+\ve\|x_n\|)\qquad(x\in\ZZ^d)$$ for some $x_n\in\ZZ^d$ (the \emph{localization center} of $\psi_n$). Furthermore, in the proof of \cite[Corollary 7.3]{delRioJitomirskayaLastSimon1996}, it is shown that $\|x_n\|\geq\frac{1}{3}n^{\frac{1}{2}}-C_0$ for some $C_0<\infty$.
	
	Let $\Delta''$ be an open interval such that $\Delta'\subsetneq\Delta''\subsetneq\Delta$, and pick some smooth $\vf:\RR\to[0,1]$ such that $\left.\vf\right|_{\Delta^c}=0$ and $\left.\vf\right|_{\Delta'}=1$.
	
	Then for all $z\in\CC$ such that $\Re\{z\}\in\Delta'$, $\RR\ni\lambda\mapsto(1-\vf(\lambda))(\lambda-z)^{-1}$ is a smooth function with compact support and hence by the Helffer-Sj\"ostrand smooth functional calculus \cite[Appendix A]{Elbau_Graf_2002}, $(\Id-\vf(H))R(z)$ has position-basis matrix elements with  any rate polynomial off-diagonal decay, uniformly as $\Im\{z\}\to0$.
	
	Hence we concentrate on the off-diagonal decay of matrix elements of the operator $\vf(H)R(z)$. Since $\im(\chi_\Delta(H))$ is spanned by $\Set{\psi_n}_{n\in\NN}$ and $\supp(\vf)\subseteq\Delta$, we may write $$ \vf(H)R(z) = \sum_{n\in\NN} \frac{\vf(\lambda_n)}{\lambda_n-z}\psi_n\otimes\psi_n^\ast $$ so that using $(\sum_n a_n)^s \leq \sum_n a_n^s$, valid for all $s\leq1$, we get (for some $C'_s<\infty$) \begin{align*} \int_{E\in\Delta'}\|(\vf(H)R(E+\ii\eta))_{xy}\|^s\dif{E} &\leq \int_{E\in\Delta'}\sum_{n\in\NN} |\lambda_n-E-\ii\eta|^{-s}\|\psi_n(x)\|^s\|\psi_n(y)\|^s\dif{E}\\
	&\leq C'_s \sum_{n\in\NN} C_\ve^{2s} \exp(-s\mu(\|x-x_n\|+\|y-x_n\|)+2s\ve\|x_n\|)\\
	&\leq C''_{s,\ve} \exp(-\frac{1}{2}s\mu(\|x-y\|))\sum_n \exp(-3s\ve(\|x_n\|-\|x\|)+2s\ve\|x_n\|) \\
	& \leq C''_{s,\ve} \exp(-\frac{1}{2}s\mu(\|x-y\|)+3s\ve\|x\|)\sum_n \exp(-s\ve\frac{1}{3}n^{\frac{1}{2}}+3s\ve C_0) \\
	& \leq C'''_{s,\ve} \exp(-\frac{1}{2}s\mu(\|x-y\|)+3s\ve\|x\|)\,.  \end{align*}
	
	Combining the two estimates using $1 = \vf + (1-\vf)$ yields the result.
\end{proof}

\paragraph{Further consequences from the probabilistic fractional moment condition.}

Now we present the following additional \emph{deterministic} consequence of \cref{eq:the FMC}, which however--this is \cref{lem:Energy avg of Greens function decays uniformly in height from real axis}--is also a consequence of \cref{eq:deterministic decay of the Borel bounded functional calculus} together with the assumption of uniform finite degeneracy of the localized eigenvalues, as in \cref{def:insulator}. Hence, we include the following result mainly for general interest, but it is not strictly necessary for the logical flow.
\begin{prop}
	If $H$ is localized on $\Delta$ in the sense of \cref{eq:the FMC}, then almost-surely, for any $\alpha\in\NN$, there is a (random) constant $C_\alpha<\infty$ and (deterministic) $a\in\ell^1(\ZZ^d)$ such that \begin{align}
		\sup_{\eta\neq0}\int_{\Delta'}\norm{G(x,y;\cdot+\ii\eta)}^s \leq C_\alpha \frac{1}{|a(x)|}(1+\|x-y\|)^{-\alpha}\qquad(x,y\in\ZZ^d)
		\label{eq:energy average of fractional power of G is poly det decaying}
	\end{align} where $\Delta'\subset\Delta$ is a compact sub-interval. 
	\label{prop:Energy avg of Greens function decays uniformly in height from real axis}
\end{prop}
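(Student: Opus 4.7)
The plan is to promote the probabilistic pointwise-in-$\eta$ bound \cref{eq:the FMC} to an almost-sure, energy-integrated polynomial decay in $\|x-y\|$ with an $\ell^1$ blow-up in $x$, absorbing $\sup_{\eta\neq 0}$ into a random constant. First (\emph{uniformization}), by compactness of $\overline{\Delta'}\subsetneq\Delta$ together with the standard continuity arguments for the FMC, I upgrade \cref{eq:the FMC} to a single triple $s\in(0,1)$, $C<\infty$, $\mu>0$ valid for all $E\in\Delta'$; integrating over $\Delta'$ and invoking Fubini then gives
\begin{align*}
\sup_{\eta\neq 0}\EE\!\left[\int_{\Delta'}\|G(x,y;\lambda+\ii\eta)\|^s\dif\lambda\right]\leq C|\Delta'|\ee^{-\mu\|x-y\|}.
\end{align*}

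Second (\emph{interchanging $\sup_\eta$ and $\EE$}), the delicate step: I use that the FMC entails pure-point spectrum in $\Delta$ with a SULE eigenbasis $\{\psi_n\}_n$, as invoked in the proof of \cref{lem:Energy avg of Greens function decays uniformly in height from real axis}. Split $G=\chi_\Delta(H)G+(\Id-\chi_\Delta(H))G$; the second summand satisfies a Combes--Thomas bound uniform as $\eta\to 0$, while for the first one expands in eigenfunctions and uses concavity $(\sum_n a_n)^s\leq\sum_n a_n^s$ for $s\in(0,1)$ to obtain
\begin{align*}
\|\chi_\Delta(H)G(x,y;\lambda+\ii\eta)\|^s\leq\sum_{\lambda_n\in\Delta}\frac{\|\psi_n(x)\|^s\|\psi_n(y)\|^s}{|\lambda_n-\lambda-\ii\eta|^s}.
\end{align*}
Since $\int_{\Delta'}|\lambda_n-\lambda-\ii\eta|^{-s}\dif\lambda$ is bounded by an $\eta$- and $\lambda_n$-independent constant $K_s$ when $s<1$, integrating in $\lambda$ and taking $\sup_{\eta\neq 0}$ dominates everything pathwise by an eigenfunction correlator $K_s\sum_{\lambda_n\in\Delta}\|\psi_n(x)\|^s\|\psi_n(y)\|^s$ plus a bounded term. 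Its expectation is controlled back by the FMC via the standard spectral-averaging identity (relating the correlator to the $\liminf_{\eta\to 0^+}$ of integrated fractional moments, by Fatou), producing $\EE\!\left[\sup_{\eta\neq 0}\int_{\Delta'}\|G(x,y;\cdot+\ii\eta)\|^s\right]\leq\tilde C\ee^{-\tilde\mu\|x-y\|}$.

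Third (\emph{summability to a pathwise polynomial bound}), set $F(x,y):=\sup_{\eta\neq 0}\int_{\Delta'}\|G(x,y;\cdot+\ii\eta)\|^s$, fix $\epsilon>0$, and put $a(x):=(1+\|x\|)^{-(d+1)-\epsilon}\in\ell^1(\ZZ^d)$. For each $\alpha\in\NN$ the random sum
\begin{align*}
Z_\alpha:=\sum_{x,y\in\ZZ^d}a(x)(1+\|x-y\|)^\alpha F(x,y)
\end{align*}
has finite expectation (the inner $y$-sum converges since exponential beats polynomial, the outer $x$-sum converges since $a\in\ell^1$), so $Z_\alpha<\infty$ almost surely; term-by-term non-negativity then gives $F(x,y)\leq Z_\alpha|a(x)|^{-1}(1+\|x-y\|)^{-\alpha}$, i.e.\ \cref{eq:energy average of fractional power of G is poly det decaying} with $C_\alpha:=Z_\alpha$. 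Intersecting the full-measure events $\{Z_\alpha<\infty\}$ over the countable set $\alpha\in\NN$ finishes the proof. The main obstacle is the second step: the FMC controls $\sup_\eta\EE$, not $\EE[\sup_\eta]$, so the swap forces one to exploit the spectral consequences (pure-point spectrum, SULE) of the FMC rather than use \cref{eq:the FMC} alone as a black box.
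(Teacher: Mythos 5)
Your route is genuinely different from the paper's. The paper first proves a $\limsup_{\eta\to0^+}$ version of the bound (its \cref{lem:energy average of fractional power of G is exp det decaying in limit to real axis}) directly from \cref{eq:the FMC} by a pure Fatou/Fubini/summation argument with no spectral theory at all, and then upgrades $\limsup_{\eta\to0^+}$ to $\sup_{\eta\neq0}$ by showing that $z\mapsto\int_{\Re\{z\}-a}^{\Re\{z\}+a}\norm{G(x,y;\cdot+\ii\Im\{z\})}^s$ is \emph{subharmonic} on $\CC_+$ and invoking a Phragm\'en--Lindel\"of-type maximum principle (together with Combes--Thomas away from the real axis). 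You instead transplant the eigenfunction-expansion argument that the paper uses only for the deterministic \cref{lem:Energy avg of Greens function decays uniformly in height from real axis}, attempting to feed it with consequences of the FMC. Your first step (uniformization over $\Delta'$ via Jensen) and your third step (summing $a(x)(1+\norm{x-y})^\alpha F(x,y)$ over $x,y$, taking expectation, and reading off a pathwise bound from a.s.\ finiteness of the sum) are both correct and in fact coincide with what the paper does inside its \cref{lem:energy average of fractional power of G is exp det decaying in limit to real axis}.

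The gap is in your second step, at the point where you claim the \emph{fractional} eigenfunction correlator $\sum_{\lambda_n\in\Delta}\norm{\psi_n(x)}^s\norm{\psi_n(y)}^s$ is ``controlled back by the FMC via the standard spectral-averaging identity \ldots by Fatou.'' This is not standard and, as stated, does not work. The spectral-averaging relations between correlators and fractional Green's function moments carry a compensating factor (schematically $\eta^{s-1}$, or $(1-s)$ in the $s\uparrow1$ limit) which you have dropped; without it the inequality points the wrong way, and $\liminf_{\eta\to0}\int_{\Delta'}\norm{G}^s$ does \emph{not} dominate the $s$-correlator. Worse, for $s\in(0,1)$ the sum $\sum_n\norm{\psi_n(x)}^{2s}$ can diverge even though $\sum_n\norm{\psi_n(x)}^2\leq1$, so your dominating quantity is not even obviously finite pathwise. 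The paper's deterministic lemma avoids this precisely by using the full SULE structure --- the uniform bound $\norm{\psi_n(x)}\leq C_\ve\exp(-\mu\norm{x-x_n}+\ve\norm{x_n})$ \emph{together with} the localization-center spreading $\norm{x_n}\geq\frac13 n^{1/2}-C_0$, which in turn rests on uniform finite degeneracy --- to make the $s$-th-power sum converge. If you want to run your argument, you must first establish that the FMC yields, almost surely, this full SULE package (via a.s.\ dynamical localization and \cite{delRioJitomirskayaLastSimon1996}); that is a nontrivial importation the proposal does not perform, and it is exactly the machinery the paper's own proof of \cref{prop:Energy avg of Greens function decays uniformly in height from real axis} is designed to bypass. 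A secondary issue: your Combes--Thomas claim for $(\Id-\chi_\Delta(H))G(x,y;\lambda+\ii\eta)$ with $\lambda\in\Delta'$ is not immediate as $\eta\to0$, since $z$ approaches $\sigma(H)$; the paper handles this with a smooth cutoff $\vf$ and the Helffer--Sj\"ostrand calculus rather than the sharp projection $\chi_\Delta(H)$, whose own off-diagonal decay is part of what is at stake.
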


To calibrate \cref{eq:energy average of fractional power of G is poly det decaying}, let us first make the
\begin{rem}
	Regardless of the status of localization for $H$, for \emph{all} disorder configurations, for any fixed interval $I\subset\RR$, $s\in(0,1)$ and $x,y\in\ZZ^d$, \begin{align}\sup_{\eta\neq0}\int_I\norm{G(x,y;\cdot+\ii\eta)}^s<\infty\end{align} as shown in \cite[Eq. (8.2)]{AizenmanWarzel2016}.
	\label{rem:a-priori bound for energy avg}
\end{rem}

To prove \cref{prop:Energy avg of Greens function decays uniformly in height from real axis}, we start by making the weaker statement about what happens not in $\sup_{\eta\neq0}$ but only in the $\lim_{\eta\to0^+}$.

\begin{lem}
	If $H$ obeys the fractional moment condition on $\Delta$, \cref{eq:the FMC}, then for any compact sub-interval $\Delta'\subset\Delta$, almost-surely, there is a (random) constant $C<\infty$ and (deterministic) $a\in\ell^1(\ZZ^d),\mu>0,s\in(0,1)$ such that \begin{align}
	\limsup_{\eta\to0^+}\int_{\Delta'}\norm{G(x,y;\cdot+\ii\eta)}^s \leq C |a(x)|^{-1}\ee^{-\mu\norm{x-y}}\qquad(x,y\in\ZZ^d)
	\label{eq:energy average of fractional power of G is exp det decaying in limit to real axis}
	\end{align}
	\label{lem:energy average of fractional power of G is exp det decaying in limit to real axis}
\end{lem}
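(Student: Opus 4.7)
The plan is to upgrade the expectation bound in the fractional moment condition \cref{eq:the FMC} to an almost-sure bound via Fubini, Fatou, and Borel--Cantelli. First, on a compact subinterval $\Delta'\subsetneq\Delta$ a uniform choice of exponent $s\in(0,1)$ and constants $C<\infty,\mu>0$ in \cref{eq:the FMC} is standard; applying Fubini--Tonelli yields
\begin{align*}
\sup_{\eta\neq 0}\EE[Y_\eta(x,y)] \leq |\Delta'| C \ee^{-\mu\norm{x-y}} \qquad (x,y\in\ZZ^d),
\end{align*}
where $Y_\eta(x,y):=\int_{\Delta'}\norm{G(x,y;E+\ii\eta)}^s dE$ is continuous in $\eta>0$ (by dominated convergence and the a-priori bound $\norm{G(E+\ii\eta)}\leq|\eta|^{-1}$ on any interval bounded away from $\eta=0$).

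Next, for any sequence $\eta_n\to 0^+$ Fatou's lemma gives $\EE[\liminf_n Y_{\eta_n}(x,y)]\leq|\Delta'|C\ee^{-\mu\norm{x-y}}$. To replace $\liminf$ by $\limsup$ I would show that $\lim_{\eta\to 0^+}Y_\eta(x,y)$ exists almost surely: the fractional moment condition forces pure point spectrum on $\Delta$ almost surely (Simon--Wolff), and for $s\in(0,1)$ the nontangential maximal function of $G(x,y;\cdot)$ --- handled by polarizing into Herglotz components and invoking the weak-$L^1$ bound of Privalov/Kolmogorov, in the spirit of \cite{AizenmanWarzel2016} --- lies in $L^1_{\mathrm{loc}}(dE)$ almost surely. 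This provides an $\eta$-independent integrable dominator of $\norm{G(E+\ii\eta)}^s$, and dominated convergence then delivers continuity of $\eta\mapsto Y_\eta$ down to $\eta=0^+$, so that $\limsup=\lim=\liminf$ almost surely.

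Finally, I apply Markov's inequality with the weight $|a(x)|^{-1}\ee^{-\mu'\norm{x-y}}$ for an explicit $a\in\ell^1(\ZZ^d)$ (say $a(x):=(1+\norm{x})^{-d-1}$) and a fixed $\mu'\in(0,\mu)$:
\begin{align*}
\PP\bigl[\limsup_{\eta\to 0^+}Y_\eta(x,y) > |a(x)|^{-1}\ee^{-\mu'\norm{x-y}}\bigr] \leq |\Delta'|C|a(x)|\ee^{-(\mu-\mu')\norm{x-y}},
\end{align*}
whose sum over $(x,y)\in\ZZ^d\times\ZZ^d$ factorizes as $\norm{a}_1\sum_z\ee^{-(\mu-\mu')\norm{z}}<\infty$. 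Borel--Cantelli then yields an almost-sure event outside which only finitely many pairs $(x,y)$ exceed the threshold; absorbing these exceptions into a random constant $C(\omega)$ gives the claim simultaneously for all $x,y\in\ZZ^d$.

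The main obstacle will be the $\liminf\to\limsup$ passage, which rests on the finer weak-$L^1$ maximal-function theory for Herglotz boundary values rather than a soft Fatou argument. If one were content with $\liminf_{\eta\to 0^+}$ in place of $\limsup_{\eta\to 0^+}$, the proof would collapse into a routine Fubini--Fatou--Borel--Cantelli exercise using only the uniform expectation bound established in the first step, which in turn is likely enough for the way this lemma enters the proof of the stronger \cref{prop:Energy avg of Greens function decays uniformly in height from real axis}.
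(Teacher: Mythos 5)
Your outline is correct and lands on the same conclusion, but it takes a genuinely different route in two places. For the passage from the expectation bound to an almost-sure bound simultaneous in $(x,y)$, you use Markov plus Borel--Cantelli over pairs, whereas the paper runs a single weighted first-moment argument: it shows $\EE\bigl[\sum_{x,y}\limsup_{\eta\to0^+}\int_{\Delta'}\norm{G(x,y;\cdot+\ii\eta)}^s\,|a(x)|\ee^{+\mu\norm{x-y}}\bigr]<\infty$ (via Fatou, an exchange of $\EE$ with $\limsup_{\eta\to0^+}$, Fubini, and Jensen's inequality to reduce the $E$-dependent exponents $s_E$ to a common $s$), so the weighted sum is a.s.\ finite and each summand is bounded by a random constant. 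The two devices are essentially equivalent in strength; the paper's version hands you the $|a(x)|^{-1}$ weight for an arbitrary $a\in\ell^1(\ZZ^d)$ with no bookkeeping of exceptional pairs, while yours is perhaps more familiar. The more substantive divergence is the $\liminf$-versus-$\limsup$ issue, which you correctly single out as the crux: a soft Fatou argument only controls $\liminf_{\eta\to0^+}$, and domination alone does not give $\EE[\limsup_\eta Y_\eta]\leq\limsup_\eta\EE[Y_\eta]$ unless the limit actually exists. The paper dispatches this by citing the deterministic a-priori bound $\sup_{\eta\neq0}\int_I\norm{G(x,y;\cdot+\ii\eta)}^s<\infty$ of \cref{rem:a-priori bound for energy avg} and invoking dominated convergence; your proposal to establish a.s.\ existence of $\lim_{\eta\to0^+}Y_\eta$ via the weak-$L^1$ maximal-function theory for Herglotz boundary values (after polarization) is the heavier but more explicit way of supplying exactly the integrable dominator and a.e.\ boundary convergence that make that step airtight, and it does not actually require the Simon--Wolff pure-point input, since a.e.\ existence of boundary values of Herglotz functions is deterministic. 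In short: both proofs share the skeleton (uniformize the constants on $\Delta'$, bound the expectation uniformly in $\eta$, convert to an a.s.\ statement), and your version trades the paper's compact one-line appeal to \cref{rem:a-priori bound for energy avg} for a self-contained harmonic-analysis justification of the boundary limit, at the cost of more machinery.
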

\begin{proof}
	Relating to the notations above \cref{eq:the FMC}, let $s:=\min(\Set{s_E|E\in\Delta'})$, $$ C':=|\Delta'|\max(\Set{C_{E}^{\frac{s}{s_{E}}}|E\in\Delta'}),\quad \mu:=\frac{s}{2\max(\Set{s_E|E\in\Delta'})}\min(\Set{\mu_E|E\in\Delta'}) $$ and pick \emph{any} $a\in\ell^1(\ZZ^d)$. By \cref{rem:a-priori bound for energy avg} we know that $\limsup_{\eta\to0^+}\int_{\Delta'}\norm{G(x,y;\cdot+\ii\eta)}^s<\infty$, so that we may use the dominated convergence theorem in the second line below. Fatou's lemma is applied in the first line, Fubini in the third, and we have \begin{align*}
		&\qquad\star:=\EE[\sum_{x,y\in\ZZ^d}\limsup_{\eta\to0^+}\int_{\Delta'}\norm{G(x,y;\cdot+\ii\eta)}^s|a(x)|\ee^{+\mu\norm{x-y}}]  \\ &\leq \liminf_{\Lambda\to\ZZ^d}\sum_{x,y\in\Lambda}\EE[\limsup_{\eta\to0^+}\int_{\Delta'}\norm{G(x,y;\cdot+\ii\eta)}^s|a(x)|\ee^{+\mu\norm{x-y}}] \\ &= \liminf_{\Lambda\to\ZZ^d}\sum_{x,y\in\Lambda}\limsup_{\eta\to0^+}\EE[\int_{\Delta'}\norm{G(x,y;\cdot+\ii\eta)}^s|a(x)|\ee^{+\mu\norm{x-y}}] \\ &= \liminf_{\Lambda\to\ZZ^d}\sum_{x,y\in\Lambda}\limsup_{\eta\to0^+}\int_{\Delta'}\EE[\norm{G(x,y;\cdot+\ii\eta)}^s]|a(x)|\ee^{+\mu\norm{x-y}}
	\end{align*}
	Now we note that if \cref{eq:the FMC} holds for some $s_E\in(0,1)$, then the same holds with $s_E$ replaced by $\sigma$ for all $\sigma\in(0,s_E)$ by Jensen's inequality: \begin{align*}
		\sup_{\eta\neq0}\EE[\norm{G(x,y;E+\ii\eta)}^\sigma]\leq C_{E}^{\frac{\sigma}{s_{E}}}\ee^{-\frac{\sigma}{s_{E}}\mu_{E}\norm{x-y}}\qquad(x,y\in\ZZ^d)\,.
	\end{align*} If we restrict to $E\in\Delta'$ then the RHS is bounded by $\frac{1}{|\Delta'|}C'\ee^{-2\mu\norm{x-y}}$. Hence \begin{align*}\star&\leq\liminf_{\Lambda\to\ZZ^d}\sum_{x,y\in\Lambda}\limsup_{\eta\to0^+}\int_{\Delta'}\frac{1}{|\Delta'|}C'\ee^{-2\mu\norm{x-y}}|a(x)|\ee^{+\mu\norm{x-y}} \\ &= C'\sum_{x,y\in\ZZ^d}\ee^{-\mu\norm{x-y}}|a(x)| \\ &= C' \norm{a}_{\ell^1}\sum_{x\in\ZZ^d}\ee^{-\mu\norm{x}} \\ & < \infty\,.\end{align*} Hence there is some random $C<\infty$ such that $$ \sum_{x,y\in\ZZ^d}\limsup_{\eta\to0^+}\int_{\Delta'}\norm{G(x,y;\cdot+\ii\eta)}^s|a(x)|\ee^{+\mu\norm{x-y}} \leq C $$ and so also $$ \limsup_{\eta\to0^+}\int_{\Delta'}\norm{G(x,y;\cdot+\ii\eta)}^s \leq C |a(x)|^{-1}\ee^{-\mu\norm{x-y}} \qquad (x,y\in\ZZ^d)$$ which is what we wanted to prove.
\end{proof}

Our next task is to upgrade the $\limsup$ to a $\sup$. We first have to establish a certain subharmonicity: 
\begin{lem}
	With $\mathbb{C}_{+}:=\Set{z|\Im\left\{ z\right\} >0}$ we have that for any $a>0,s\in(0,1)$,
	\begin{align*}
		\mathbb{C}_{+}\ni z & \mapsto  \int_{\Re\left\{ z\right\} -a}^{\Re\left\{ z\right\} +a}\norm{G\left(x,y;\cdot+\ii\Im\left\{ z\right\} \right)}^{s}=:\varphi_{x,y,a,s}\left(z\right)\in\mathbb{R}
	\end{align*}
	is a subharmonic function.
	\begin{proof}
		First note that $\mathbb{C}_{+}\ni z\mapsto G\left(x,y;z \right)\in \Mat_{N}\left(\mathbb{C}\right)$
		is analytic, so that $\mathbb{C}_{+}\ni z\mapsto\norm{G\left(x,y;z\right)}^{s}=:g\left(z\right)$
		is \emph{subharmonic}. To verify that $\varphi$ is subharmonic, we
		pick any $z\in\mathbb{C}_{+}$ and $r>0$ such that $\overline{B_{r}\left(z\right)}\subseteq\mathbb{C}_{+}$
		(so $r<\Im\left\{ z\right\} $) and aim to prove $\varphi\left(z\right)\leq\frac{1}{2\pi}\int_{0}^{2\pi}\varphi\left(z+r\ee^{i\theta}\right)\dif{\theta}$:
		\begin{align*}
			\varphi\left(z\right) & \equiv  \int_{\Re\left\{ z\right\} -a}^{\Re\left\{ z\right\} +a}g\left(\lambda+\ii\Im\left\{ z\right\} \right)\dif{\lambda}\\
			&   \left(\text{Subharmonicity of }g\right)\\
			& \leq  \int_{\Re\left\{ z\right\} -a}^{\Re\left\{ z\right\} +a}\frac{1}{2\pi}\int_{\theta=0}^{2\pi}g\left(\lambda+\ii\Im\left\{ z\right\} +r\ee^{\ii\theta}\right)\dif{\theta}\dif{\lambda}\\
			&   \left(\text{Fubini}\right)\\
			& \leq  \frac{1}{2\pi}\int_{\theta=0}^{2\pi}\int_{\Re\left\{ z\right\} -a}^{\Re\left\{ z\right\} +a}g\left(\lambda+\ii\Im\left\{ z\right\} +r\ee^{\ii\theta}\right)\dif{\lambda}\dif{\theta}\\
			&   \left(\lambda':=\lambda+r\cos\left(\theta\right)\right)\\
			& =  \frac{1}{2\pi}\int_{\theta=0}^{2\pi}\int_{\Re\left\{ z\right\} -a+r\cos\left(\theta\right)}^{\Re\left\{ z\right\} +a+r\cos\left(\theta\right)}g\left(\lambda'+\ii\left(\Im\left\{ z\right\} +r\sin\left(\theta\right)\right)\right)\dif{\lambda}\dif{\theta}\\
			&   \left(\text{Use definition of }\varphi\right)\\
			& =  \frac{1}{2\pi}\int_{\theta=0}^{2\pi}\varphi\left(\Re\left\{ z\right\} +r\cos\left(\theta\right)+\ii\left(\Im\left\{ z\right\} +r\sin\left(\theta\right)\right)\right)\dif{\theta}\\
			& =  \frac{1}{2\pi}\int_{\theta=0}^{2\pi}\varphi\left(z+r\ee^{\ii\theta}\right)\dif{\theta}\,.
		\end{align*}
	\end{proof}
\end{lem}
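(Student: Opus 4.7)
The plan is a two-step reduction: first I would establish pointwise subharmonicity of $g(z) := \norm{G(x,y;z)}^s$ on $\mathbb{C}_+$, then promote this to subharmonicity of the horizontal average $\varphi_{x,y,a,s}$ by a Fubini argument.

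For the pointwise step, $z \mapsto G(x,y;z) = (H-z\Id)^{-1}_{xy}$ is a matrix-valued holomorphic function on $\mathbb{C}_+$, because the resolvent is norm-analytic off the spectrum. A Vesentini-type theorem guarantees that for any Banach-space-valued analytic $f$, $\log\norm{f}$ is subharmonic on its domain of analyticity; composing with the convex non-decreasing map $t \mapsto \ee^{st}$ (valid for any $s>0$) preserves subharmonicity, so $g$ is subharmonic on $\mathbb{C}_+$.

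For the averaging step, I would rewrite $\varphi_{x,y,a,s}(z) = \int_{-a}^{a} g(z+t)\,\dif t$ with real parameter $t$, so that the interval of integration no longer depends on $z$. Fix $z_0 \in \mathbb{C}_+$ and $r>0$ small enough that $\overline{B_r(z_0+t)} \subset \mathbb{C}_+$ for every $t \in [-a,a]$. Applying the sub-mean-value inequality for $g$ at each translate $z_0+t$ and then swapping integrals by Fubini yields
\begin{align*}
\frac{1}{2\pi}\int_0^{2\pi}\varphi_{x,y,a,s}(z_0+r\ee^{\ii\theta})\,\dif\theta
&= \int_{-a}^{a}\frac{1}{2\pi}\int_0^{2\pi} g(z_0+t+r\ee^{\ii\theta})\,\dif\theta\,\dif t\\
&\geq \int_{-a}^{a} g(z_0+t)\,\dif t \;=\; \varphi_{x,y,a,s}(z_0)\,.
\end{align*}
Upper semi-continuity and local integrability of $\varphi_{x,y,a,s}$ are inherited from those of $g$, since on any compact $K \subset \mathbb{C}_+$ one has $\norm{G(x,y;z)} \leq 1/\dist{z,\sigma(H)}$ uniformly.

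The main obstacle I expect is the subharmonicity of $\norm{G(x,y;\cdot)}^s$ for operator-valued $G$ with $s<1$, where the naive attempt to use convexity of $t \mapsto t^s$ fails because $t \mapsto t^s$ is concave on $[0,\infty)$; the route through $\log\norm{f}$ combined with the convex non-decreasing exponential $t \mapsto \ee^{st}$ circumvents this. Once that pointwise fact is granted, the rest is a direct Fubini and change-of-variables manipulation with no substantial analytic difficulty.
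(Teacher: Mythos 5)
Your proposal is correct and follows essentially the same route as the paper: establish subharmonicity of $g(z)=\norm{G(x,y;z)}^{s}$ and then transfer the sub-mean-value inequality to the horizontal average via Fubini (your substitution $\varphi(z)=\int_{-a}^{a}g(z+t)\,\dif t$ is just a cleaner packaging of the paper's change of variables $\lambda'=\lambda+r\cos\theta$). If anything, you supply more detail than the paper, which merely asserts that $\norm{G(x,y;\cdot)}^{s}$ is subharmonic; your observation that one must go through the subharmonicity of $\log\norm{f}$ for Banach-valued analytic $f$ and compose with the convex increasing map $t\mapsto\ee^{st}$ (since $t\mapsto t^{s}$ is concave) is exactly the right justification for that step.
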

\begin{proof}[Proof of \cref{prop:Energy avg of Greens function decays uniformly in height from real axis}]

Since $\varphi_{x,y,a,s}$ is a sub-harmonic function which decays in $\norm{x-y}$ above the real axis (via the Combes-Thomas estimate) and whose $\limsup$ on the real axis decays in $\norm{x-y}$ via \cref{lem:energy average of fractional power of G is exp det decaying in limit to real axis}, we find our result using either \cite[Theorem 4.2]{Aizenman2001} or \cite[Proposition 25]{Graf_Shapiro_2017}.
\end{proof}

\medskip
\noindent\textbf{Acknowledgements:} It is a pleasure to thank Gian Michele Graf for many useful discussions. I am grateful to Jeffrey Schenker for \cref{lem:Energy avg of Greens function decays uniformly in height from real axis}, which allows one to avoid \cref{prop:Energy avg of Greens function decays uniformly in height from real axis} and hence greatly simplify \cref{def:insulator}. This research is supported in part by Simons Foundation Math + X Investigator Award \#376319 (Michael I. Weinstein).

\begingroup
\let\itshape\upshape
\printbibliography
\endgroup
\end{document}